\documentclass[10pt, conference, letterpaper]{IEEEtran}
\IEEEoverridecommandlockouts
\usepackage{cite}
\usepackage{amsmath,amssymb,amsfonts}
\usepackage{booktabs} 
\usepackage{graphicx}
\usepackage{textcomp}
\usepackage{algorithm}
\usepackage{algpseudocode}
\usepackage{xspace}
\usepackage{hyperref}
\usepackage{amsfonts}
\usepackage{amsthm}
\usepackage{balance}
\usepackage{subcaption}

\newtheorem{theorem}{Theorem}

\newtheorem{observation}[theorem]{Observation}

\def\BibTeX{{\rm B\kern-.05em{\sc i\kern-.025em b}\kern-.08em
    T\kern-.1667em\lower.7ex\hbox{E}\kern-.125emX}}

\newcommand{\update}{R\xspace}
\newcommand{\centered}[1]{\begin{tabular}{c} #1 \end{tabular}}

\newcommand{\PLH}{\text{x}}
\newcommand{\win}{W}
\newcommand{\greedy}{greedy $k$-matching\xspace}
\newcommand{\bma}{\textsc{Online-BMA}\xspace}
\def\card#1{\lvert #1 \rvert}
\DeclareMathOperator{\outdeg}{out-deg}
\DeclareMathOperator{\indeg}{in-deg}
\DeclareMathOperator{\Fbfs}{ForwardBFS}
\DeclareMathOperator{\Bbfs}{BackwardBFS}
\DeclareMathOperator{\dist}{dist}
\newcommand{\model}{\textsc{TMT}\xspace}
\newcommand{\netw}{N}
\DeclareMathOperator{\apl}{APL}
 \newcommand{\algo}{\textsc{GreedyEgoTrees}\xspace}
\newcommand{\D}{\mathcal{D}}
\def\p{\mathrm{p}}
\def\nets{\mathcal{N}}
\DeclareMathOperator*{\argmin}{\arg\min}
\DeclareMathOperator{\dan}{DAN}

\usepackage{tikz}
\usetikzlibrary{shapes.geometric, arrows}
\usetikzlibrary{fit,calc}
\newcommand*{\tikzmk}[1]{\tikz[remember picture,overlay,] \node (#1) {};\ignorespaces}
\newcommand{\boxit}[5]{\tikz[remember picture,overlay]{\node[yshift=3pt,fill=#1,opacity=.25,fit={($(A)-(#2\linewidth,#3\baselineskip)$)($(B)+(#4\linewidth,#5\baselineskip)$)}] {};}\ignorespaces}
%
\colorlet{mode1}{red!40}
\colorlet{mode2}{cyan!60}
\colorlet{mode3}{green!70}
\colorlet{mode4}{gray!70}

\pagenumbering{gobble}
\begin{document}

\title{One For All, All for One! \\ Efficient Self-Adjusting Leaf-Spine Network  \thanks{Don't forget to thanks ERC.} }
\title{A Perfect Match? \\ Exploring (imperfect) Self-Adjusting Leaf-Spine Network Designs  \thanks{Don't forget to thanks ERC.}}
\title{Efficient Self-Adjusting Optical Leaf-Spine Network Design}
\title{Efficient Self-Adjusting Optical Leaf-Spine Network}
\title{Self-Adjusting Ego-Trees Topology for Reconfigurable Datacenter Networks}

\author{
\IEEEauthorblockN{Chen Griner, Gil Einziger, and Chen Avin}
\IEEEauthorblockA{\textit{Ben-Gurion University of the Negev}, Israel}
} 


\maketitle
\pagestyle{plain}

\begin{abstract}
State-of-the-art topologies for datacenters (DC) and high-performance computing (HPC) networks are demand-oblivious and static. Therefore, such network topologies are optimized for the worst-case traffic scenarios and can't take advantage of changing demand patterns when such exist. However, recent optical switching technologies enable the concept of dynamically reconfiguring circuit-switched topologies in real-time. This capability opens the door for the design of self-adjusting networks: networks with demand-aware and dynamic topologies in which links between nodes can be established and re-adjusted online and respond to evolving traffic patterns. 

This paper studies a recently proposed model for optical leaf-spine reconfigurable networks. 
We present a novel algorithm, \algo,  that dynamically changes the network topology. The algorithm greedily builds ego trees for nodes in the network, where nodes cooperate to help each other, taking into account the global needs of the network. We show that \algo has nice theoretical properties, outperforms other possible algorithms (like static expander and greedy dynamic matching) and can significantly improve the average path length for real DC and HPC traces. 

\end{abstract}



\section{Introduction}
Communication networks in general and datacenter (DC) networks, in particular, have become a critical infrastructure in our digital society. The popularity of data-centric applications,  e.g., related to entertainment, science, social networking, and business, is rapidly increasing. The ongoing COVID-19 pandemic has further highlighted the need for an efficient
communication infrastructure, which is now critical for, e.g., online teaching, virtual conferences, and health \cite{OECD20Keeping}.  

Network topology is directly related to network performance in terms of delay, throughput, and reliability. Therefore,  research and innovations in network topologies are a fundamental part of network design in industry and academia alike~\cite{rotornet,ballani2020sirius,flexspander,xpander,clos}. 
State-of-the-art (SoA) datacenter network designs typically rely on \emph{static} and \emph{demand-oblivious} optical switches. However, 
DC  networks currently serve a variety of specific applications such as web search, machine learning (ML), High-performance-computing (HPC), or distributed storage.
Each 
application creates different and dynamic demand patterns
and the overall traffic may be a mix of changing patterns \cite{benson2010network,roy2015inside}.
Hence, it is unclear if \emph{fixed} network topologies are the right tool for highly optimized networked environments. 

However, even if network designers take traffic demands into their designs, they have little idea of the specific applications that will use the network. Thus, the common approach is to design static and \emph{demand-oblivious} network topologies optimized toward worst-case scenarios like all-to-all communication. However, such an aoblivious pproach is \emph{uncommon} in other computing frameworks; for example, servers use caches (and similar techniques) to enhance  performance and respond to the actual demand.
In recent years, the maturation of optical switching and networks has introduced exciting opportunities for network design. Namely, optical switches that can \emph{dynamically} reconfigure their internal connectivity (input-output port matching), which in turn changes the global circuit-switched  network topology without rewiring or other physical changes to the network. 
The enabling of such \emph{dynamic} switches and dynamic topologies resulted in a flourish of proposals for 
reconfigurable optical networks \cite{hall2021survey}.
These proposals can be divided into two main dimensions.
The first is to keep the reconfiguration \emph{demand-oblivious} and use dynamicity to rotate between predefined topology configurations, e.g., RotorNet \cite{mellette2016scalable,rotornet}, Opera \cite{opera}, and Sirius \cite{ballani2020sirius}. The main advantage of this approach is that rotation can be done fast on a nanosecond scale \cite{ballani2020sirius}.
The second approach, which we focus on more in this paper, is to make the reconfiguration \emph{demand-aware} and adjust the switches based on actual demand in real-time, e.g., Helios \cite{helios}, c-Through \cite{cthrough}, Eclipse, \cite{venkatakrishnan2018costly}, ProjecTor \cite{projector} and others \cite{hall2021survey}. 
The con of this approach is the slower reconfiguration times which are micro-second scale \cite{hall2021survey}.

Interestingly, a recently proposed model, the ToR-Matching-ToR (TMT) model, \cite{cerberus} is a unified model that uses a two layers leaf-spine network architecture and can describe \emph{both} static, dynamic, demand-oblivious, and demand-aware systems. Figure~\ref{fig:TMT} (left) illustrates the TMT model with seven leaf switches (ToR) and three spine switches (each with a matching) (see Section \ref{sec:model} for formal details).

Our work investigates the possibility of a dynamic network topology that can adjust itself to the workload's characteristics. Intuitively, such a network can exploit localities \cite{Denning2005} in the communication patterns and introduce a dynamic topology that optimizes current trends rather than worst-case trends. Thus, dynamic networks can yield shorter routing paths and higher throughput than static networks in highly structured workloads. Specifically, static networks often use expander graphs \cite{xpander} or hierarchical (e.g., FatTree \cite{clos}) topologies to optimize the network \emph{diameter} which give an upper bound for the (average) path length.
While there are several metrics of interest when studying a network's topology, this work focuses on the average path length since a shorter route length leads to better utilization of links and higher network throughput \cite{namyar2021throughput}.  As we will see formally later, our network model provides the designer (i.e., our algorithm) a set of $k$ matchings of size $n$, namely a set of $n k$ direct links that we can reconfigure dynamically. 
A central perspective we examined in this work is to treat our edges similarly to a \emph{links cache} \cite{cacheNet}. 
but, caching network links is 
different than caching arbitrary objects. First, there is a dependency between links in the cache since several consecutive cached links create a \emph{path}, and a collection of links create a cached \emph{graph}. Second, links caches don't have binary hit/miss behavior but incur costs according to the path length between the source and destination. Thus, adding (or removing) a link to the cache may impacts many requests from many sources to many destinations.
\begin{figure}[t]
  \begin{centering}
  \includegraphics[width=\columnwidth]{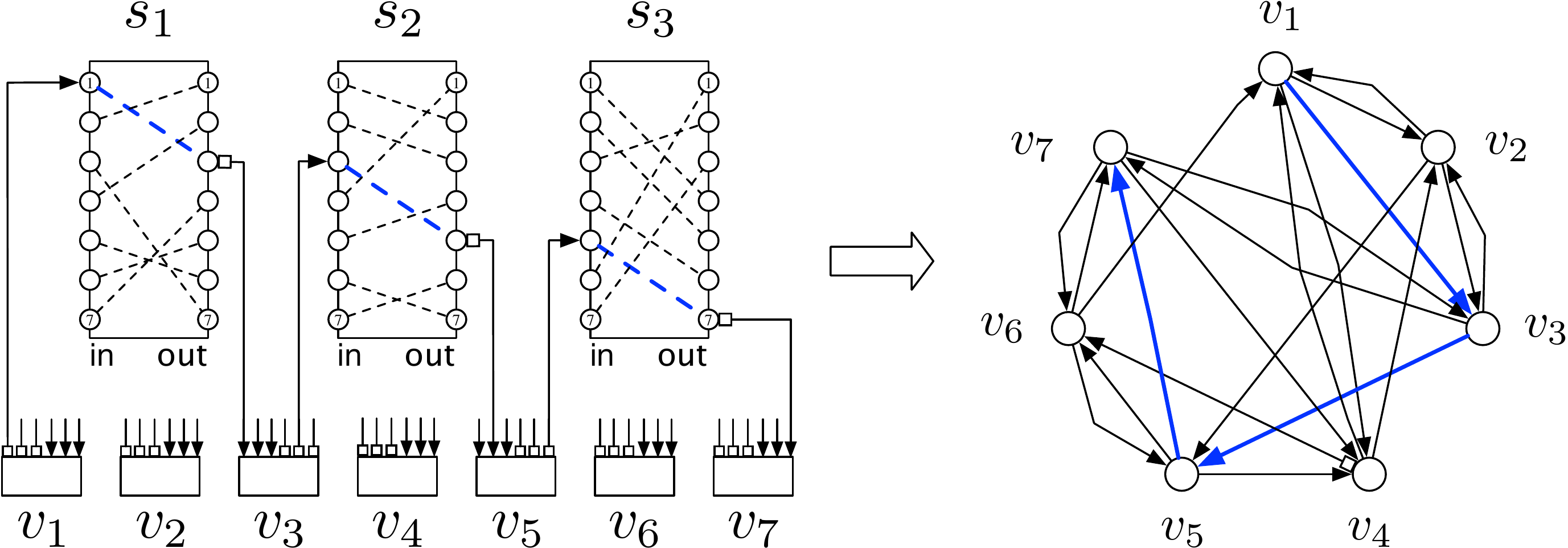} 
    \caption{Example of the leaf-spine \model model (left) with $n=7$ ToR nodes and $k=3$ spine switches (matchings) and the corresponding $k$ regular, directed, network graph at time $t$, $\netw(t)=\bigcup_{i=1}^{k} M(i,t)$ (right).
    The (directed) multi-hop path $v_1, s_1, v_3, s_2, v_5, s_3, v_7$ is shown in the figure. Matching are reconfigured over time within each spine switch.
    } 
    \label{fig:TMT}
  \end{centering}
\end{figure}

\subsubsection*{\bf Our Contribution}
Our work demonstrates that we can reduce the \emph{average path length} (APL) significantly on real network traces using our links cache approach.
While even algorithms that establish only \emph{direct} links, i.e., paths of length one, improve static topologies, we show that regarding the links cache as a \emph{graph} provide stronger benefits. 
In turn, as our main contribution, we proposed a novel online algorithm, \algo, which forms also \emph{indirect} links, looks on the cached links as a graph, and attains a considerably better performance.  We prove several theoretical properties of \algo and evaluate it on eight application traces with feasible parameters within the current technological limitations. Our evaluation demonstrates a consistent reduction in APL, of up to $\approx 60\%$ compared to static expander networks.

The rest of the paper is organized as follows:
In Section \ref{sec:model} we formally present the network model and the metric of interest. In Section \ref{sec:static} we first discuss static solutions and first present \algo.
Next, in Section \ref{sec:online} we discuss \emph{online} and \emph{dynamic} algorithms, and the online version of our algorithms.
Section ~\ref{sec:evaluation} introduces datasets and some further algorithms that we use for evaluation in Section~\ref{sec:results}, where we present our empirical results.
After reviewing related work in~\ref{sec:related} we conclude the paper in
Section~\ref{sec:conclusion} with a short discussion. 



\section{Model and Preliminaries}\label{sec:model}

Our network model is motivated by recent proposals for two layers leaf-spine network architectures in which spine switches support reconfigurable \emph{matching} between input-output ports \cite{rotornet,opera}.
Such architectures are called the ToR-Matching-ToR (\model) model \cite{cerberus} and can model existing systems, e.g., Eclipse \cite{venkatakrishnan2018costly} or ProjecToR \cite{projector}, which relies on a \emph{demand-aware} switches, RotorNet \cite{rotornet}, and Opera \cite{opera}, which rely on a \emph{demand-oblivious} switches with matchings that rotate over time or an optical variant of Xpander~\cite{xpander} which can be built from a collection of static matchings.
Formally, the network interconnects a set of $n$ nodes $\{v_1,v_2, \dots, v_n\}$ (e.g., leafs, ToR switches) using a set of $k$ optical spine switches $S = \{s_1, s_2, \dots, s_k\}$.  Each spine switch has a set of $n$ input and $n$ output ports, internally connected via a directed (i.e., uni-direction) \emph{matching} from the input to the output port. These matchings can be dynamic and change over time.  We denote the matching on switch $i$ at time $t$ by $M(i,t)$. 

Each node (i.e., ToR switch) has $k$ up links and $k$ down links. Given any (leaf) node $v_i$, it's $j$th up port is connected to the $i$th input port of spine switch $s_j$ and its $j$th down port is connected to the $i$th output port of spine switch $s_j$. These links are static and do not change, while, as mentioned, the internal matching inside each switch can change.
See Fig. \ref{fig:TMT} (left) for an example of the \model with seven leaf nodes and three spine switches at a given time $t$.


At each time $t$ our (abstract) network is the union the $k$-matchings, $\netw(t)=\bigcup_{i=1}^{k} M(i,t)$. 
Notice that when all matchings are disjoint \emph{perfect matchings}, having $n$ directed edges, then the resulting network $\netw(t)$ is always a $k$-regular \emph{directed} graph with $nk$ edges. Fig. \ref{fig:TMT} (right) shows an example for 
$\netw(t)$ which is a $3$-regular, directed graph. 
The network $\netw(t)$ supports multi-hop routing during time $t$ where a path of length $2\ell$ on the \model network is always of the form $(v_{i_1}, s_{j_1}, v_{i_2}, s_{j_2}, \cdots , s_{j_{\ell}}, v_{i_{\ell+1}})$ and is translated to a path of length $\ell$ on $\netw(t)$ of the form $(v_{i_1}, v_{i_2}, \cdots, v_{i_{\ell+1}})$.
Fig. \ref{fig:TMT} highlights the path $(v_1, s_1, v_3, s_2, v_5 ,s_3, v_7)$ of length 3 from the source $v_1$ to the destination $v_7$,  both on the \model model (left) and on the network $\netw(t)$ (right).


The network $\netw(t)$ is assumed to serve a workload or network traffic represented as a trace of packets or flowlets \cite{perry2017flowtune}. Formally, a trace $\sigma$ is an ordered sequence of communications requests (e.g., IP packets) $\sigma=((s_1,d_1),(s_2,d_2),(s_3,d_3),\ldots)$,
where $s_t,d_t$ represent the source and destination nodes, respectively, and the request, $(s_{t-1},d_{t-1})$ occurs before the request $(s_t,d_t)$. 

When the $t$th request, $(s_t, d_t)$, from a source $s_t$ to a destination $d_t$ arrives, the cost to serve it is assumed to be proportional to the shortest distance (i.e., number of hops in forwarding the packet) between $s_t$ and $d_t$  on the network $\netw(t)$ which we denote as $\dist_{N(t)}(s_t,d_t)$. 
Recall that our model enables switches to fully reconfigure their connections as long as they form a set of $k$ matchings,
and by that to change $\netw(t)$ over time. 
We assume in our model that switches are restricted to \emph{update} their configuration (matching) only at a predefined rate $\frac{1}{\update}\le 1$, namely $\update$ is a minimum number of consecutive communication requests that are required between two updates. This update rate accounts for the delay needed when changing a configuration in modern switches (see Section \ref{sec:results}).  Therefore, our network is static between configurations (i.e., $\netw(t)=\netw(t+1)$), for a period of $\update$ requests, but the network following a reconfiguration may be completely different (i.e., usually $\netw(t) \neq \netw(t+\update)$). 



A \emph{self-adjusting network} algorithm $\mathcal{A}$ is an \emph{online} algorithm \cite{albers2006online} that selects the $k$-matchings that compose the network at time $t$,  namely, $\netw(t)$. Such \emph{dynamic} algorithms adjust the topology based on some \emph{history} of past requests \cite{avin2019toward} and uses it as an approximation for the near future.  We denote by $\mathcal{A}_{\update}$ an algorithm that is forced to make changes at most once per $\update$ consecutive requests.     
A \emph{static} algorithm is an algorithm that sets the network (i.e., $k$-matching) once and does not change it along the trace. We denote this network as $\netw_0$. An \emph{offline static} algorithm is assumed to know the whole trace $\sigma$ (i.e., the future) when it decides or computes $\netw_0$.  

In turn, our work utilizes the \emph{average path length} ($\apl$) as the cost metric we tries to optimize.
$\apl$ is defined as the cost to serve an entire trace $\sigma$ of length $m=\card{\sigma}$ with respect to 
an algorithm $\mathcal{A}$ and an update rate $\update$. Formally, 
\begin{align}\label{eq:apl}
 \apl(\mathcal{A}_{\update}, \sigma)=\frac{1}{m}\sum_{t=1}^{m}\dist_{N(t)}(s_t,d_t). 
\end{align}

In the next section, we discuss two static algorithms, including our novel proposed algorithm \algo and in Section \ref{sec:online}
we discuss online algorithms.

\section{Static Demand-Aware $k$-Matchings}\label{sec:static}
In this section we explore \emph{static}, \emph{offline} and \emph{demand-aware} algorithms that yield a static network $\netw$ for all the traffic. Next, in Section \ref{sec:online} we study the \emph{online} and \emph{dynamic} version of the problem that constructs a dynamic network $\netw(t)$. 

In the static, offline \emph{demand-aware} network design (DAN) problem \cite{avin2019demandDIST}, we receive a demand distribution $\D$, which describes the frequency (or probability) $\p(u,v)$ of requests between every (directed) pair of nodes in the network. Alternately we can assume that the algorithm receives the trace $\sigma$ as an input, and $\D$ describe the empirical distribution of $\sigma$. Note that since $\D$ is a distribution, we have $\sum \p(u,v) = 1$. The goal of the offline DAN problem is to design a static network (aka a host graph) $\netw \in \nets_k$  which minimizes the \emph{weighted-average path length} where, in our model, we require that $\nets_k$ is the set of all possible networks that are a union of $k$ (directed) matchings. Formally, the $k$-regular DAN problem is:
\begin{align}\label{eq:dan}
\dan(\D) =\argmin_{\netw \in \nets_k} \sum_{(u,v) \in \D} \p(u, v) \cdot \dist_{\netw}(u,v) 
\end{align}

\begin{algorithm}[t]
\caption{\textsc{GreedyMatching}$(\D, k)$ Algorithm}\label{alg:Greedy}
	\begin{algorithmic}[1]
		\Require Demand Matrix $\D$, $k$ - number of switches
		\Ensure $k$-Demand-Aware Matching
    \State Initiate $\netw$ as an empty graph \Comment{{\color{blue} Will be $k$ regular}}
	\State Sort requests in $\D$ by frequency (breaking ties rand.)
	\tikzmk{A}
	\For {each request $(s,d)$} \Comment{{\color{blue} By order of frequency}}
		\If {$\outdeg(s)<k$ and $\indeg(d)<k$}
		\State add $(s,d)$ to $G$
		\EndIf
	\tikzmk{B}
    \boxit{mode2}{1}{1.8}{0.82}{-0.2}
    \EndFor
    \If {$G$ is not $k$ regular and strongly connected}
        \State add (random) edges to make $\netw$, $k$ regular, connected
    \EndIf
    \State Convert $G$ to $k$ matchings
	\end{algorithmic}
\end{algorithm}

Before presenting our algorithm, we first discuss the \emph{greedy $k$ matching} algorithm, a simple, naive, but appealing algorithm to our problem.

\subsection{Selfish Approach: Greedy $k$-Matching}

The weighted $k$-matching problem is an extension of the well-known \emph{weighed matching} problem (i.e., $k=1$) \cite{edmonds1965maximum, duan2010approximating}. As is commonly known, a simple \emph{greedy matching} solves the weighted matching problem with an approximation ratio of $\frac{1}{2}$ \cite{avis1983survey}. It is important to note that the optimization goal of the matching problem (the weight of the matching) is different than that of the $\dan$ problem (minimum average path length). Nevertheless, the problems are related since a maximum matching finds a feasible set of requests (that can be served in a single hop) with the maximum probability mass of in $\D$.  

Therefore, the greedy $k$-matching algorithm follows the same spirit by building a maximum weight $k$ regular directed graph greedily using edges with the largest probabilities in $\D$. It starts by sorting the requests in $\D$ according to their frequencies. Then it greedily adds requests as long as both the source and destination degrees are less than $k$. Algorithm~\ref{alg:Greedy}, \textsc{GreedyMatching}, provides pseudo-code for this approach.

Since our optimization problem is different than the weighted matching problem and allows adding edges that are not in $\D$,
if we do not yet have a $k$-regular directed graph at the end of this phase, our algorithm differentiates from the classical greedy matching that stops and quits.
In contrast, our algorithm continues and adds \emph{random edges} until we get a $k$-regular graph. We use the fact that any $k$-regular directed graph is decomposable to $k$ perfect matchings. Formally, 


\begin{figure*}[t!]
    \centering
    \begin{tabular}{c|c|c}
    \centered{\includegraphics[width=.20\textwidth]{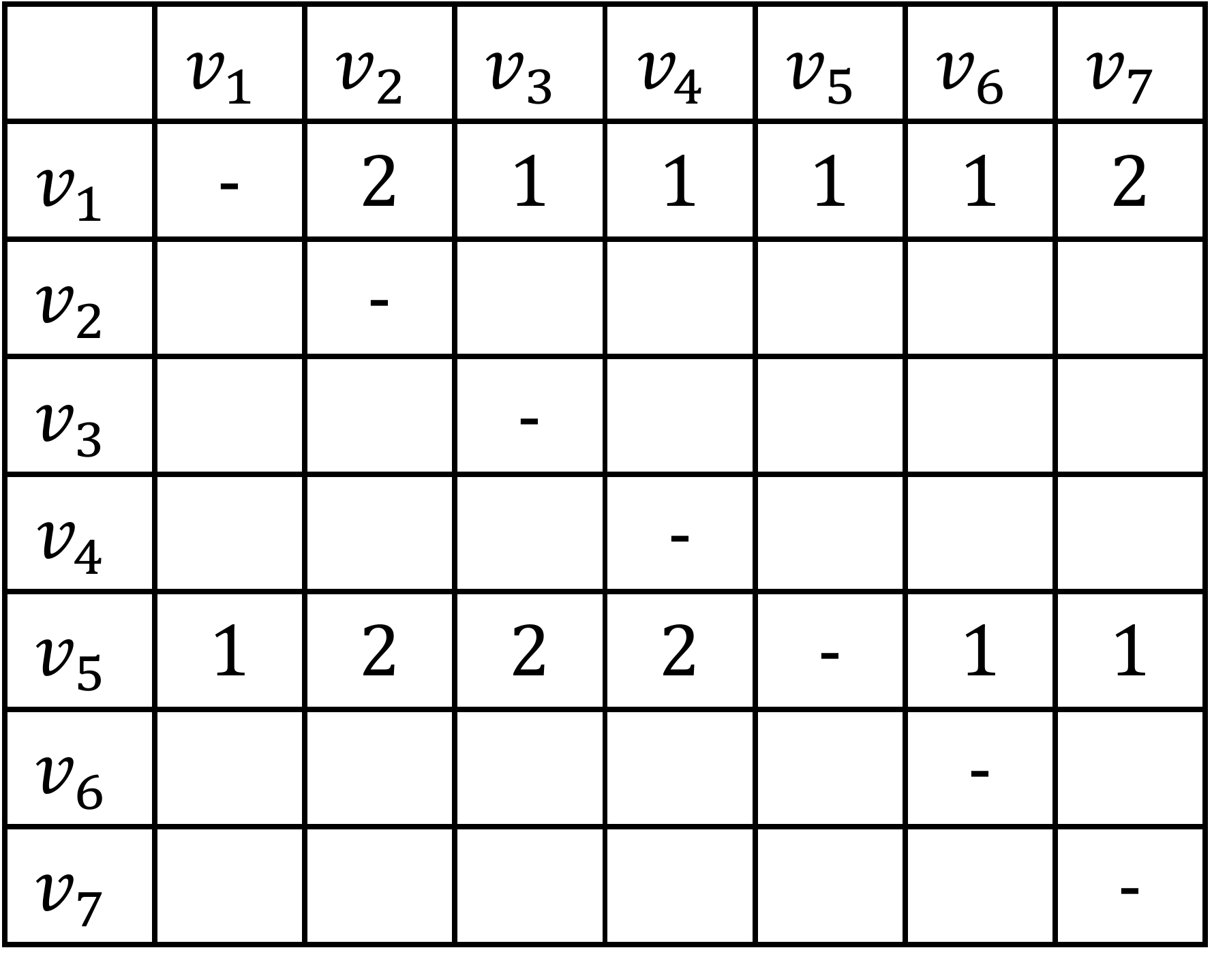}}
    &
    \begin{tabular}{cc}
     \includegraphics[width=.17\textwidth]{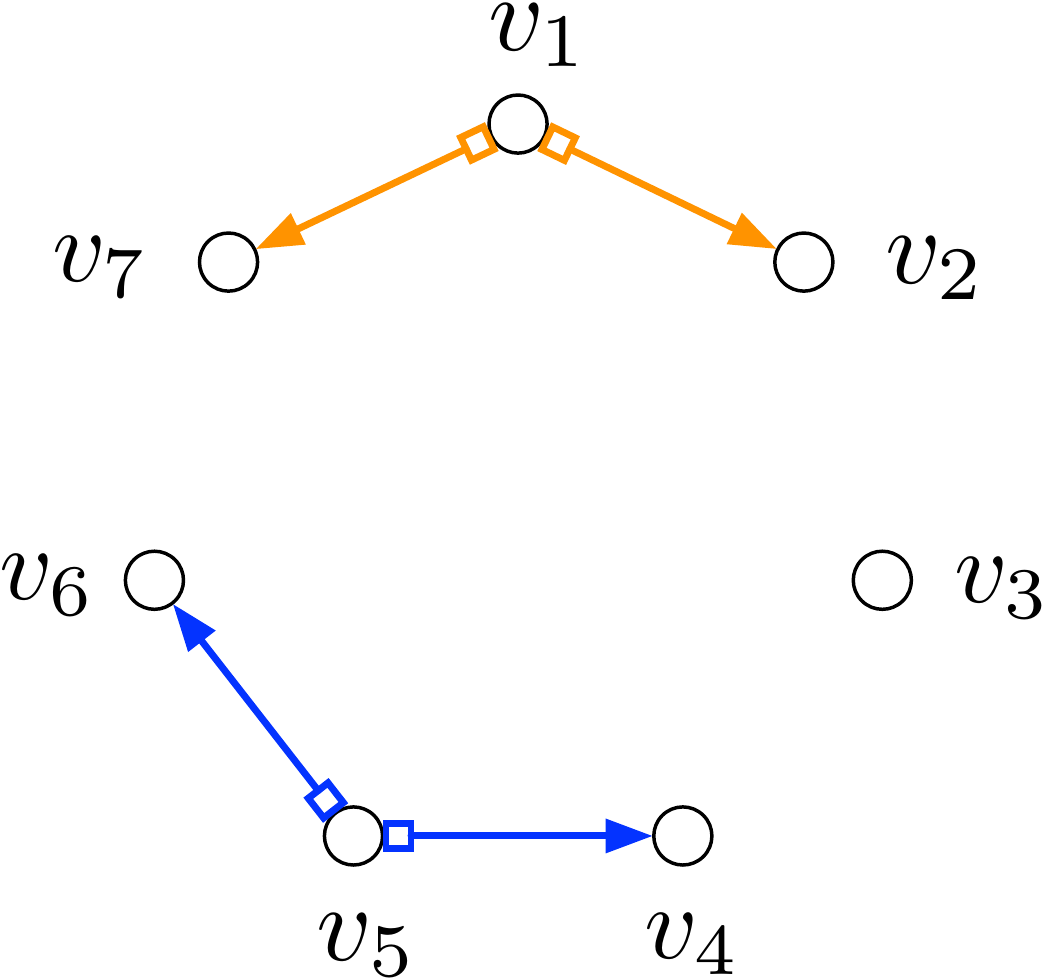} &
     \includegraphics[width=.15\textwidth]{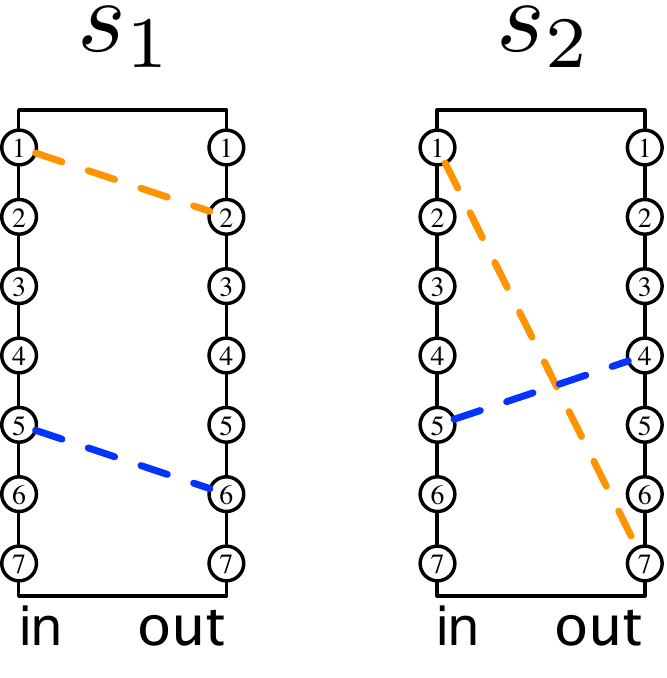}
    \end{tabular}
    &
    \begin{tabular}{cc}
     \includegraphics[width=.17\textwidth]{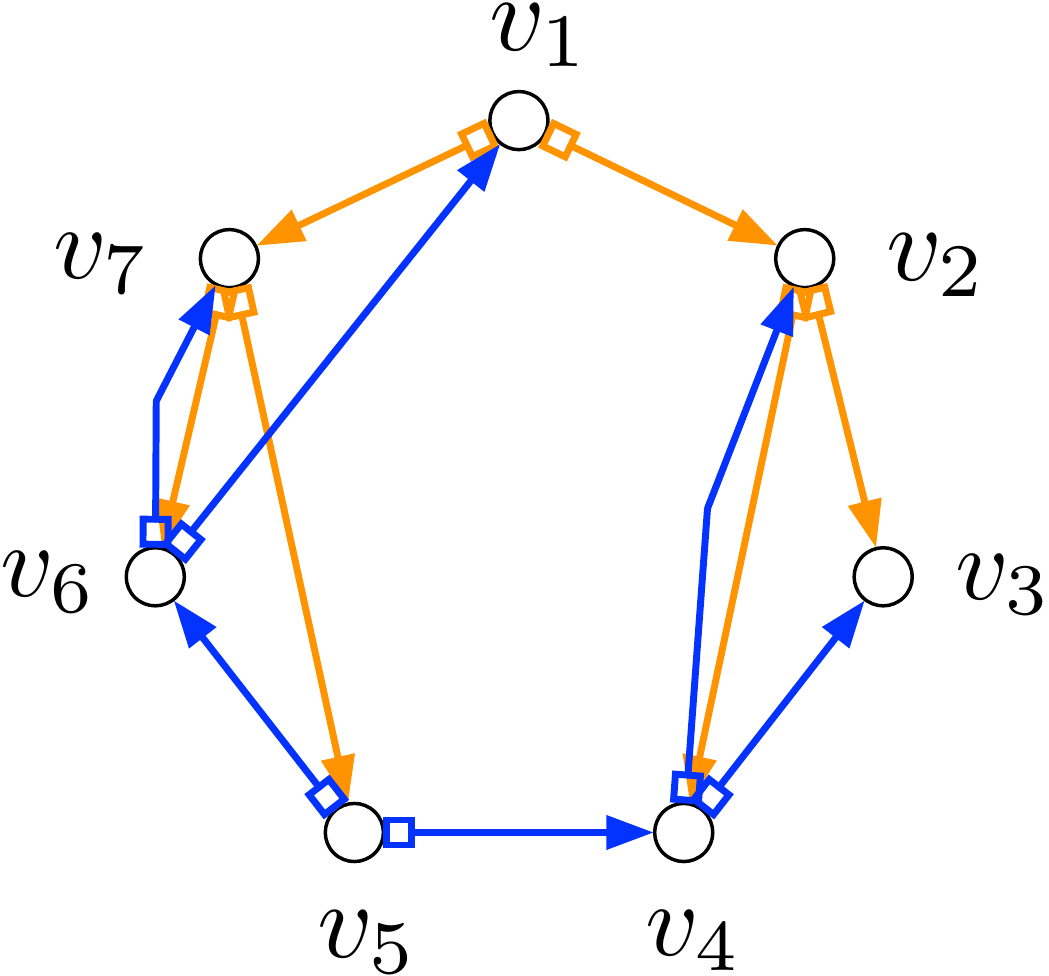} &
     \includegraphics[width=.15\textwidth]{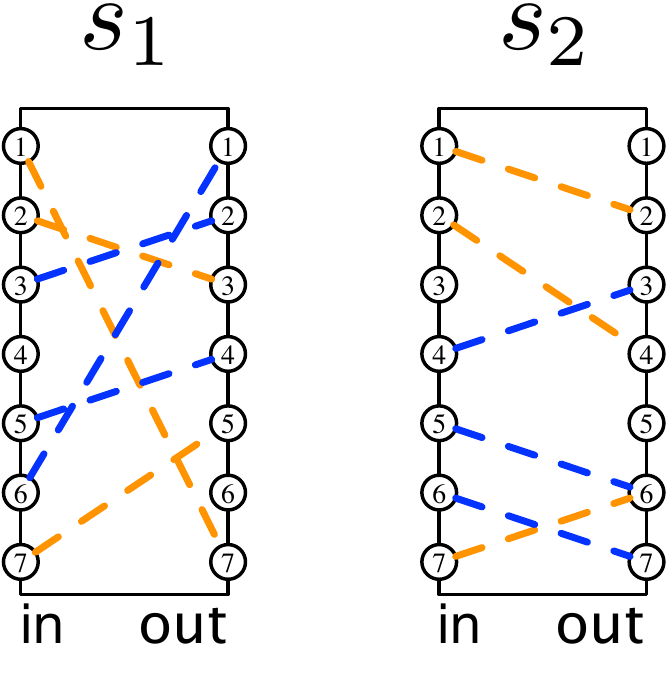}
    \end{tabular}\\
    (a) Demand matrix & (b) The $k$-matching solution & (c) \algo solution  
    \end{tabular}
    \caption{An example of the difference between $k$-matching and \algo. (a) a (directed) demand matrix with a star-like structure, only $v_1$ and $v_5$ communicates to all other nodes. (b) The result of the greedy $2$-matching, shows the network and the corresponding two matching in $s_1$ and $s_2$. (c) The \algo solution for the same demand.}
    \label{fig:get-example}
\end{figure*}

\begin{theorem}\label{thm:decompos}
Any $k$-regular (multi) directed graph can be decomposed to $k$ perfect matchings (one for each of the $k$ switches)
\end{theorem}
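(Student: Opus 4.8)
The plan is to reduce the decomposition of a $k$-regular directed multigraph $\netw$ to the classical fact that a $k$-regular \emph{bipartite} multigraph decomposes into $k$ perfect matchings. First I would build a bipartite graph $B$ that encodes $\netw$: introduce a left vertex $u_i$ for the out-side of each node $v_i$ and a right vertex $w_i$ for its in-side, and for every directed edge $(v_i,v_j)$ of $\netw$ (counted with multiplicity) place an undirected edge $\{u_i,w_j\}$ in $B$. Since $\netw$ is $k$-regular, every node satisfies $\outdeg(v_i)=\indeg(v_i)=k$; hence each left vertex $u_i$ has degree $k$ (its out-edges) and each right vertex $w_j$ has degree $k$ (its in-edges), so $B$ is a balanced $k$-regular bipartite multigraph on $n+n$ vertices.

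Next I would extract a single perfect matching from $B$ using Hall's theorem. For any subset $S$ of the left vertices, the $k|S|$ edges incident to $S$ all land in its neighborhood $N(S)$, whose vertices can absorb at most $k|N(S)|$ edges; therefore $|N(S)|\ge|S|$, Hall's condition holds, and $B$ admits a perfect matching $M_1$. Deleting $M_1$ lowers every degree by exactly one, yielding a $(k-1)$-regular bipartite multigraph, so by induction on $k$ (with base case $k=0$, the empty graph) the edges of $B$ partition into $k$ perfect matchings $M_1,\dots,M_k$ — this is precisely König's edge-coloring theorem.

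Finally I would translate each $M_\ell$ back to $\netw$. A perfect matching of $B$ pairs every $u_i$ with exactly one $w_{\pi(i)}$ and every $w_j$ with exactly one left vertex, i.e.\ it encodes a permutation $\pi$; the corresponding directed edges $\{(v_i,v_{\pi(i)})\}$ form a $1$-regular sub-digraph of $\netw$ in which every node has out-degree and in-degree one, which is exactly a matching realizable by a single spine switch. Because the $M_\ell$ partition the edges of $B$, the induced $1$-regular sub-digraphs partition the edges of $\netw$, giving the desired decomposition into $k$ switch matchings.

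I expect the only delicate points to be bookkeeping for multigraphs and self-loops, and stating the correspondence cleanly enough that the back-translation is manifestly a valid per-switch configuration. Parallel directed edges simply become parallel edges in $B$, to which Hall's and König's theorems apply verbatim, and a self-loop $(v_i,v_i)$ maps to the legitimate bipartite edge $\{u_i,w_i\}$ since $u_i$ and $w_i$ are distinct vertices of $B$; thus no special handling is required, and beyond the bipartite reduction the argument is entirely standard.
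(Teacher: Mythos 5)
Your proof is correct and follows essentially the same route as the paper's: split each node into an out-vertex and an in-vertex to obtain a $k$-regular bipartite (multi)graph, extract a perfect matching via Hall's theorem, and induct on $k$. You simply spell out the details the paper leaves implicit (the verification of Hall's condition, the treatment of self-loops and parallel edges, and the back-translation to switch configurations), which is a welcome but not substantively different elaboration.
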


The theorem proof follows almost directly because if $G$ is a $k$-regular directed graph, $G$ can be represented as a $k$-regular bipartite graph (by splitting each node to two nodes). In turn, Hall's theorem \cite{hall1935} implies that a perfect matching $M$ exists in $G$. If we remove $M$ from $G$, we are left with a graph $G'$ which is $(k-1)$-regular directed graph, and we can repeat the process, $(k-1)$ more times.

The crucial limitation of the above approach is that it solves the problem from a single, direct link perspective. This approach leads to a selfish behavior where each node only adds edges for its requests.
The selfish approach cannot handle well cases where there is much traffic from a single source to more than $k$ destinations or traffic between more than $k$ sources to a single destination. Such patterns are unfortunately common in real applications. For example, search engines typically partition the search between many destinations. Each searches its local documents and then merges the results, or more generally a map-reduce framework \cite{dean2008mapreduce}. Thus, even if there are many frequently used edges with the same source and varying destinations, \greedy can only select $k$ edges with the same source. 

Fig. \ref{fig:get-example} demonstrates this problem. Fig. \ref{fig:get-example}-(a) present a weighed demand matrix in a \emph{stars} like structure where both $v_1$ and $v_5$ communicates with all other nodes (with different weighed). Fig. \ref{fig:get-example}~(a) shows the solution imposed by the $k$-matching (for $k=2)$. The solution must be a subgraph of $\D$ so only two edges from each star can be included. 

The \algo algorithm we present next overcomes this issue by
taking an altruistic approach and adding indirect paths between sources and destinations using \emph{helper} nodes, in particular other destinations of the same source. Thus, it would still add the most frequent edges in the examples above, but these would not always be direct edges due to topology limitations (of degree at most $k$). Fig. \ref{fig:get-example}-(c) presents the solution of \algo which we discuss in more detail next.


\subsection{Altruistic Approach: Greedy Ego-Trees Network}


We now introduce  \algo, a novel algorithm (see Algorithm \ref{alg:algo} for pseudo-code) to solve the offline DAN problem.
While the basic idea of \algo follows the spirit of similar algorithms like minimum spanning tree (MST) \cite{prim1957} and greedy matching,
\algo brings a new networking (or topology) perspective to the proposed solution and has a theoretical foundation that we discuss later.

To build the network $\netw \in \nets_k$, we first sort the requests in $\D$ according to their frequency. Next, we greedily create paths in the network in the order of the sorted requests until $\netw$ is a $k$ regular directed graph which is then converted to $k$  matchings (Theorem \ref{thm:decompos}) defining the concrete switches configurations.

The key idea of \algo is that when we build a path for a request $(s,d)$, all previous requests (which had higher probabilities) already have a \emph{short} path between their source and destination. Since the in-degree and out-degree of nodes can be at most $k$, we may need intermediate nodes to help us to add a short path between $s$ and $d$ to $\netw$.
We do so by initiating a \emph{forward} Breath-First-Search (BFS) starting at $s$ to find the closet \emph{available} node to $s$, denoted as $x$. That is, we seek for the closest node whose \emph{out-degree} is less than $k$.  Note that initially the closest available node can be $s$ itself. 

Next, we preform a \emph{backward} BFS starting from $d$ to find the closest available node to $d$, denoted as $y$. That is, a node whose \emph{in-degree} is less than $k$, and we can therefore add a new edge $(x,y)$ to the network.
However, we first verify that adding the edge $(x,y)$ results in a shorter path between $s$ and $d$ than the current network. 
Formally, we add the edge $(x,y)$ to $\netw$ only if $\dist(s,x) + \dist(y,d)  +1 < \dist(s,d)$. See Fig. \ref{fig:BFS} for an example of finding $x$ and $y$ using forward and backward BFSs. If no available nodes $x$ or $y$ exist we skip to the next request in the sorted list.

After adding $kn$ directed edges, or adding all the $(s,d)$ requests with $\p(s,d)>0$, if the resulting network is not $k$-regular or if it is not strongly connected, then we need to add or change some of the edges. Specifically, if the network is strongly connected but is not 
$k$-regular, we add random edges between available nodes until the network is $k$ regular. 
If the network is not strongly connected, we identify connected components and connect them by removing low weight (i.e., probability) edges and adding new edges until we reach a single strongly connected component. For simplicity of presentation, we ignore these (solvable) cases and some other minor corner cases (e.g., $x$ or $y$ do not exist) in the pseudo-code description of Algorithm \ref{alg:algo}.
The final step of the algorithm is to decompose the $k$-regular directed network $\netw$ to $k$-matchings. 

Fig. \ref{fig:get-example}~(c) presents the result of 
\algo for the demand matrix in (a). As we can observe, \algo utilizes more edges than the $k$-matching approach. Moreover, this simple example builds an optimal directed ego tree, both for $v_1$ (in green) and $v_5$ (in blue).

\begin{algorithm}[t]
\caption{\textsc{GreedyEgoTrees}$(\D, k)$ Algorithm}\label{alg:algo}
	\begin{algorithmic}[1]
		\Require Demand Matrix $\D$, $k$ - number of switches
		\Ensure $k$-Demand-Aware Matching
    \State Initiate $\netw$ as an empty graph \Comment{{\color{blue} Will be $k$ regular}}
	\State Sort requests in $\D$ by frequency (breaking ties rand.)
	\tikzmk{A}
	\For {each request $(s,d)$} \Comment{{\color{blue} By order of frequency}}
	    \State Let $x$ be an available node in $\Fbfs(s)$
	    \State Let $y$ be an available node in $\Bbfs(d)$
		\If {$\dist(s,x) + \dist(y,d) +1< \dist(s,d)$}
		    \State add $(x,y)$ to $\netw$
		\EndIf
	\tikzmk{B}
    \boxit{mode1}{1}{1.8}{0.82}{-0.2}
    \EndFor
    \If {$G$ is not $k$ regular and strongly connected}
        \State add (random) edges to make $\netw$, $k$ regular, connected
    \EndIf
    \State Convert $\netw$ to $k$ matchings
	\end{algorithmic}
\end{algorithm}

Next, we discuss several theoretical properties of \algo.
We start with the following observation to provide fundamental insights into the motivation behind \algo and the need for less active nodes or edges to help with high-frequency requests. 
We denote by $G_\D$ the weighted directed graph when we see $\D$ as the adjacency matrix of a directed graph.
Consider any demand distribution for which $G_\D$ is a \emph{star network} with a root $r$ (i.e., $r$ is a single source, or single destination, for all requests in $\D$). In this case, \algo will create $\netw$ as a $k$-ary directed tree network where $r$ is the root and nodes' distance from $r$ is ordered by the frequency they communicate with $r$.
It is easy to see that such a $k$-ary tree optimally minimizes the weighted-average route length and that $\netw \in \nets_k$.
Similarly, we can state the following:

\begin{observation}\label{obs:stars}
\algo is optimal for a demand $\D$ for which $G_\D$ is a collection of disjoint (weighted) stars.
\end{observation}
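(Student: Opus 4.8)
The plan is to reduce the disjoint-stars case to the single-star case already analyzed just before the statement, and then to glue the per-star guarantees together by exploiting vertex-disjointness. Throughout I assume the stars are \emph{vertex-disjoint}, with roots $r_1,\dots,r_m$ whose leaf sets partition the remaining nodes, and I write $V_j$ for the vertex set of the $j$-th star; each $r_j$ is either the common source (out-star) or the common destination (in-star) of all requests in its star. I would treat out-stars explicitly; in-stars are symmetric, exchanging the roles of $\Fbfs$ and $\outdeg$ with $\Bbfs$ and $\indeg$.

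First I would make precise the single-star optimality via a Moore-type bound. In any $\netw\in\nets_k$ every node has out-degree at most $k$, so for a fixed source $r$ the number of nodes at distance exactly $d$ is at most $k^{d}$, and within distance $d$ at most $(k^{d+1}-1)/(k-1)$. For an out-star with root $r$ and leaf frequencies $p_1\ge p_2\ge\cdots$, an exchange/rearrangement argument then shows that $\sum_i \p(r,\ell_i)\,\dist_\netw(r,\ell_i)$ is minimized by placing the heaviest leaves at the smallest feasible distances: the first $k$ at distance $1$, the next $k^2$ at distance $2$, and so on. This minimum is exactly the cost of the complete frequency-ordered $k$-ary out-tree rooted at $r$, which the text already identifies as what \algo produces for a single star.

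Second, I would show that the objective decomposes additively over the stars and that \algo builds each ego-tree inside its own vertex set. Since every demand pair lies within one star and the $V_j$ are disjoint, $\sum_{(u,v)}\p(u,v)\dist_\netw(u,v)=\sum_{j=1}^{m}\sum_{\ell\in V_j}\p(r_j,\ell)\,\dist_\netw(r_j,\ell)$, so a valid lower bound for the total is the sum of the per-star Moore bounds from the first step. For the algorithm itself I would argue by induction on the processing order that every edge \algo adds lies inside some single $V_j$: the graph starts empty, and when a request $(r_j,\ell)$ is handled, $\Fbfs(r_j)$ can only traverse edges already present, which by the inductive hypothesis stay inside $V_j$, while $\Bbfs(\ell)$ returns the fresh leaf $\ell$ itself; hence the added edge $(x,\ell)$ stays inside $V_j$. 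Because no vertex is shared, the interleaving of requests across stars is irrelevant to any single tree: when star $j$'s $i$-th request is processed, its partial ego-tree is exactly what it would be had star $j$ been run in isolation with its $i-1$ heavier leaves already inserted, so each $V_j$ receives precisely the optimal $k$-ary tree.

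Finally I would close the loop: \algo's output restricted to each $V_j$ is the complete frequency-ordered $k$-ary tree, so the total cost is \emph{at most} the sum of the per-star tree costs, while the additive decomposition and the Moore bound make it \emph{at least} that same sum; hence \algo attains the global optimum, and Theorem~\ref{thm:decompos} certifies that the padded $k$-regular network is realizable as $k$ matchings, so $\netw\in\nets_k$. The main obstacle I anticipate is the lower bound rather than the algorithmic trace: one must check that no network can beat the sum of independent $k$-ary-tree costs even though the same physical helper nodes could, in principle, be reused across several stars' shortest-path balls. The resolution is that the Moore bound is a per-source statement valid regardless of node reuse, so summing the $m$ bounds stays legitimate; and since the ideal $k$-ary fill already meets the Moore bound with equality, the extra edges added for $k$-regularity can only weakly shorten distances and therefore cannot push the cost below this lower bound.
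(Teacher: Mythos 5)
Your proposal is correct and takes essentially the same route as the paper, which justifies the observation only by the informal paragraph preceding it (\algo builds the frequency-ordered $k$-ary ego-tree for a single star, which is optimal, and this extends to disjoint stars); you simply make the two implicit steps rigorous, via the Moore-type counting lower bound per source and the vertex-disjointness/additivity argument showing each star's tree is built independently. No gaps; nothing further is needed.
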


Next, we extend Observation~\ref{obs:stars} to the more general demand distribution $\D$ where $G_\D$ forms a forest and we bound the APL with the \emph{Entropy} \cite{cover2012elements} of the distribution $\D$.

The information \emph{Entropy} (or Shannon entropy) is a measure of the uncertainty, or disorder, in an information source. Since being introduced by Claude Shannon in his seminal 1948 work~\cite{shannon1948mathematical}, entropy has found many uses, including coding, compression, and machine learning to name a few \cite{cover2012elements}. Recently, the conditional entropy was proved to be a lower bound for the average path length in static DAN ~\cite{avin2019demandDIST}. Formally, for a discrete random variable $X$ with possible values
$\{x_1, \dots , x_n\}$, the (base $k$) entropy $H_k(X)$ of $X$ is defined as
\begin{align}
H(X) = \sum_{i=1}^n p(x_i)\log_k\frac{1}{p(x_i)}
\end{align}
where $p(x_i)$ is the
probability that $X$ takes the value $x_i$. Note that, $0 \cdot
\log_k\frac{1}{0}$ is considered as 0.
We can state the following about \algo:

\begin{theorem}For $k>1$ and a distribution $\D$, if $G_\D$ is a directed (weighted) forest, then the weighted-average route length of \algo is less than $H_k(\D) + 1$ where $H_k$ is the entropy (base $k$) function.
\end{theorem}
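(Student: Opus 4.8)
The plan is to bound the weighted-average route length by a \emph{per-request} argument: for every request $(u,v)$ with $\p(u,v)>0$ I would show that the distance \algo finally realizes in $\netw$ satisfies $\dist_{\netw}(u,v)\le \log_k\frac{1}{\p(u,v)}+1$. Granting this, the theorem is immediate by linearity, since
\[
\sum_{(u,v)}\p(u,v)\,\dist_{\netw}(u,v)\;\le\;\sum_{(u,v)}\p(u,v)\Bigl(\log_k\tfrac{1}{\p(u,v)}+1\Bigr)\;=\;H_k(\D)+1,
\]
where I used $\sum_{(u,v)}\p(u,v)=1$. Thus the entire proof reduces to the per-request distance bound, with the strict inequality recovered separately at the end.

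To establish the per-request bound I would first isolate the clean situation underlying Observation~\ref{obs:stars}: a single source $u$ together with its out-star. Sort $u$'s out-neighbours by frequency, so that $(u,v)$ is the $i$-th most frequent out-edge of $u$. Two elementary facts combine. First, since the out-probabilities of $u$ sum to at most $1$ and are sorted in decreasing order, $i\cdot\p(u,v)\le\sum_{j\le i}\p(u,v_j)\le 1$, hence $\p(u,v)\le 1/i$. Second, \algo fills the out-tree of $u$ as a \emph{complete $k$-ary tree} in frequency order: the first $k$ neighbours become direct children (the forward search $\Fbfs$ returns $u$ itself while its out-degree is below $k$), the next $k^2$ land at depth $2$, and so on, so the $i$-th neighbour sits at depth $d$ with $k^{d-1}\le i$. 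Combining the two facts gives $d\le 1+\log_k i\le 1+\log_k\frac{1}{\p(u,v)}$, which is exactly the desired bound whenever the realized distance equals this ego-tree depth.

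The main obstacle is to show that this depth bound, derived for an \emph{isolated} out-star, survives in the combined network $\netw$ when the stars of different sources overlap --- and this is precisely where the forest hypothesis is needed. In a forest a node may simultaneously act as a source, a destination, and a helper for another source, so the out-degree budget of a helper is shared between its own star and the trees routed through it; a naive bound could break if a helper is already saturated by the time $\Fbfs$ reaches it. I would argue that acyclicity forbids this from inflating distances: since $G_\D$ has no cycles, the ``helper-of'' relation induced by the greedy $\Fbfs$/$\Bbfs$ searches is itself acyclic, so the helpers recruited for $(u,v)$ always lie on $u$'s own descending tree, and $\Fbfs$ returns an available node at depth at most $d-1$, matching the isolated analysis; symmetrically $\Bbfs$ returns $y=v$ (distance $0$) whenever the in-degree of $v$ is not yet saturated. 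Making this rigorous requires an induction over the greedy processing order, showing that at the moment $(u,v)$ is handled every slot of the complete-$k$-ary embedding that $u$ needs is still available; this induction, together with the corner cases in which $\Fbfs$ or $\Bbfs$ must descend past an already-saturated node, is the technically delicate core of the argument.

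Finally, for the \emph{strict} inequality I would observe that the per-request bound is tight only for the single most frequent edge of a star and only when that edge carries all the probability mass (there $\p=1$ and $d=1$). As soon as a star has two or more destinations with positive probability, its top edge satisfies $d=1<1+\log_k\frac{1}{\p}$ strictly, so the weighted sum falls strictly below $H_k(\D)+1$; the sole excluded case is the degenerate single-edge demand, for which the two quantities coincide.
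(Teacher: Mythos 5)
There is a genuine gap, and it sits exactly where you flag it: the claim that each source's ego-tree is filled as a complete $k$-ary tree in \emph{local} frequency order, so that the $i$-th most frequent out-neighbour of $u$ lands at depth at most $1+\log_k i$. You defer the induction that would establish this, but the statement you are trying to prove is in fact false when stars overlap. Take $k=2$ and a source $u$ whose first two destinations $v_1,v_2$ are themselves sources with two high-frequency out-edges each, all processed before $u$'s third request $(u,v_3)$. When $(u,v_3)$ arrives, $u$ and both of its children are saturated, the forward BFS must descend to depth $2$, and $\dist_{\netw}(u,v_3)=3>1+\log_2 3$. So the per-source, local-rank route cannot deliver the depth bound on its own, and acyclicity of $G_\D$ does not rescue it: the helpers' out-degree budgets really are consumed by their own requests before $u$ needs them.

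The paper's proof avoids this entirely by counting \emph{globally} rather than per source. Rank all requests of $\D$ by frequency, so the $i$-th request satisfies $\p_i\le 1/i$ with respect to the global rank; the forest hypothesis guarantees the backward BFS returns $y=d_i$ (each node is the destination of at most one request, so its in-degree is still $0$), and the forward BFS finds an available node within depth $\log_k i$ simply because the whole network built so far has at most $i$ edges, hence the subtree reachable from $s_i$ cannot contain a saturated complete $k$-ary tree deeper than $\log_k i$. This yields $\dist_{\netw}(s_i,d_i)\le \log_k i+1\le\log_k\frac{1}{\p_i}+1$, which is the same per-request inequality you aim for, and the summation step you wrote then finishes the proof verbatim. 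In the counterexample above this is consistent: $(u,v_3)$ has global rank at least $7$, so the global bound allows depth up to $\log_2 7+1$. Your closing remark about strictness is a side issue (the paper's own sketch only establishes a non-strict inequality); the substantive repair needed is to replace the local-rank/complete-tree lemma with the global edge-count argument.
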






\begin{figure}[t]
  \begin{centering}
  \includegraphics[width=.6\columnwidth]{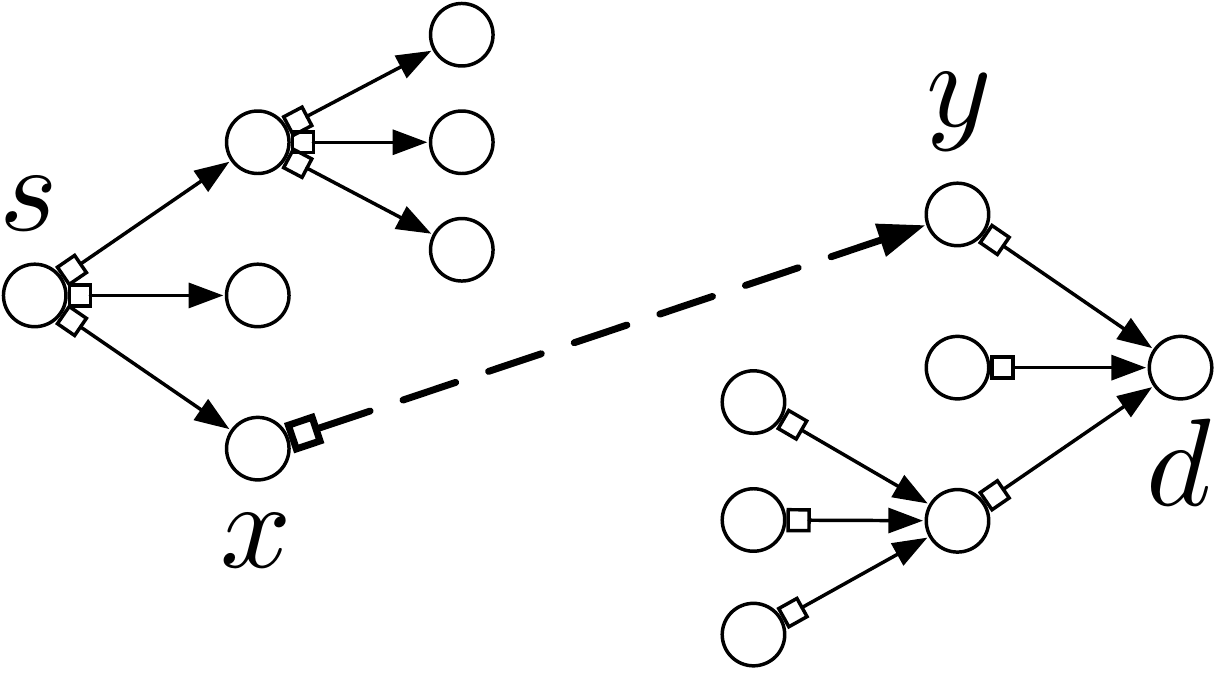} 
    \caption{\algo Algorithm: Forward BFS and backward BFS example with the corresponding $x$ and $y$.}
    \label{fig:BFS}
  \end{centering}
\end{figure}

\begin{proof}[Proof sketch]
First, let $\p_1 \ge p_2 \ge, \dots \ge p_{\card{\D}}$ denote the probabilities of the requests in  $\D$ in a non-increasing order.
Note that it must be the case that $\p_i \le \frac{1}{i}$, otherwise $\sum_{j=1}^i \p_j > 1$, contradiction for $\D$ being a distribution.
Next, we show that for the first $\card{\D}$ edges added to $\netw$, $\netw$ will be a directed forest with max in and out-degree $k$. Consider the $i$th request in the sorted list of requests, $(s_i, d_i)$. Since $G_\D$ is a forest, the $i$th request is the only request in $\D$ for which $d_i$ is a \emph{destination}. Thus, the current in-degree of $d_i$ is zero, and when \algo finishes, its degree will be one. Since the in-degree of all nodes in $\netw$ is at most, and there are no cycles, $\netw$ will also be a forest. By construction, nodes will have an out-degree of at most $k$. 
%
Now consider what will be the distance $\dist(s_i,d_i)$ after adding the edge $(x,y)$
Following Algorithm \ref{alg:algo}, $y=d_i$ and $x$ is the closet node to $s_i$ with out-degree less than $k$. Since $\netw$ is a directed forest, the sub-tree rooted at $s_i$ can have at most $i$ edges and therefore $\dist_{\netw}(s_i,d_i)$ can be at most $\lceil \log_k(i)\rceil$, so $\dist_{\netw}(s_i,d_i) < \log_k i + 1 $. Overall we have,
\begin{align}
    \sum_{i=1}^{\card{\D}} p_i \dist_{\netw}(s_i, d_i) & \le \sum_{i=1}^{\card{\D}} p_i \log_k i +1 \notag \\
    &\le \sum_{i=1}^{\card{\D}} p_i \log_k \frac{1}{\p_i} + 1 \le H_k(\D) + 1
\end{align}
\end{proof}

We note that for the case of general distribution $\D$, the conditional entropy, $H(X|Y)$ is a lower bound for the average path length~\cite{avin2019demandDIST}, where $X, Y$ are the sources and destinations nodes, respectively. Such a lower bound can be potentially much lower than the joint entropy $H(\D) = H(X, Y)$ that we prove above.
Note that after connecting in $\netw$ all pairs from $\D$, the algorithm will add random edges to create a $k$-regular directed graph.

We conclude this section by showing that the running time of \algo is polynomial.   

\begin{theorem}
The running time of \algo is $O(k^2 n^2$).
\end{theorem}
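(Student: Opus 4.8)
The plan is to bound the cost phase-by-phase, following the structure of Algorithm~\ref{alg:algo}: (i) sorting the requests of $\D$, (ii) the main greedy loop that inserts edges using a forward and a backward BFS, (iii) the completion step that adds random edges to reach $k$-regularity, and (iv) the decomposition of the resulting $k$-regular directed graph into $k$ matchings. I expect the main loop and the decomposition to dominate, and I would show that each contributes $O(k^2 n^2)$, with sorting and completion being lower-order terms.

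For the main loop, the key observation is that $\netw$ is built up to a $k$-regular directed graph, so at every point during the construction it has $n$ vertices and at most $nk$ edges. A single forward BFS from $s$ (and likewise a backward BFS from $d$) therefore costs $O(n + nk) = O(nk)$. I would stress that one forward BFS simultaneously yields every source-side distance the test needs, namely $\dist(s,x)$ and $\dist(s,d)$, while one backward BFS yields $\dist(y,d)$, so no separate shortest-path computation is required and an iteration costs $O(nk)$. To bound the number of iterations I would use that each executed insertion consumes one unit of out-degree capacity at $x$ and one unit of in-degree capacity at $y$, and the total available capacity is $nk$; hence at most $nk$ edges are ever inserted, and the loop terminates once $\netw$ is $k$-regular. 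Multiplying $O(nk)$ iterations by $O(nk)$ work per iteration gives $O(k^2 n^2)$ for the loop.

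For the decomposition I would invoke Theorem~\ref{thm:decompos}: extract a perfect matching from the $k$-regular bipartite representation, remove it, and repeat $k$ times. A single perfect matching can be found by augmenting paths in $O(n \cdot nk) = O(n^2 k)$ time, so the $k$ extractions cost $O(k \cdot n^2 k) = O(n^2 k^2)$, exactly matching the target. The completion step touches at most $nk$ edges among the remaining available ports and is subsumed, and sorting $\D$ contributes only a lower-order (near-linear) term. Summing the phases yields the claimed $O(k^2 n^2)$.

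The step I expect to be the main obstacle is bounding the number of main-loop iterations cleanly. Counting only the successful insertions ($\le nk$) is immediate, but the loop also processes requests that fail the distance test without inserting an edge, and naively there could be as many as $\card{\D} = \Theta(n^2)$ such requests, which would inflate the loop cost to $O(n^3 k)$. To keep the bound at $O(k^2 n^2)$ I would either (a) terminate the loop as soon as $\netw$ becomes $k$-regular and treat $\card{\D}$ as $O(nk)$ in the regime of interest, or (b) charge the BFS work of failed requests against the insertions in an amortized fashion; pinning down one of these, and stating the resulting assumption on $\card{\D}$ explicitly, is the crux of the argument.
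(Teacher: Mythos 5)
Your proposal matches the paper's own proof essentially step for step: sorting as a lower-order term, $O(nk)$ per BFS with at most $nk$ edge insertions giving $O(k^2n^2)$ for the main loop, distances read off the same BFS trees, and the decomposition costed as $k$ extractions of a perfect matching at $O(nm)=O(kn^2)$ each. The one place you go beyond the paper is the obstacle you flag at the end: the paper's proof silently counts BFS work only for edges that are actually added (and sorts in $O(nk\log(nk))$, implicitly treating $\card{\D}$ as $O(nk)$), so the issue of requests that fail the distance test is real but left unaddressed there too; your option (a) --- stopping once $\netw$ is $k$-regular or assuming $\card{\D}=O(nk)$ --- is the reading under which both your argument and the paper's are correct.
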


\begin{proof}[Proof overview]
The primary operations in \algo are of polynomial time.
We go over them in the order of the algorithm.
Sorting can be done in $O(nk \log (nk))$.
Next, we need to add one edge, one at a time to $\netw$. Such addition may require the source and destination nodes to construct their BFS tree (forward or backward). BFS runs in $O(m)$ where $m$ is the number of edges. Since we have at most $m=kn$ edges to add, all $m$ BFS searches, two for each edge can be made at a total time of $O(k^2n^2)$.
Additionally, we need to find $\dist_N(s,d)$ in the current network for each path we build. This operation can be done when the source $s$ (or destination $d$) preform the BFS mentioned above. If there is a path from $s$ to $d$, it will be found.
Finally, adding random edges and the graph decomposition (where each maximum matching takes at most $O(nm)=O(kn^2)$) takes no more than $O(k^2n^2)$ as well. 
\end{proof}
We believe that improving the running time of \algo is possible, but leave this question for future work.
In the next section use our static algorithms as building blocks for our discussion on  \emph{online} algorithms.

\section{Online $k$-Regular DAN}\label{sec:online}

 The online DAN problem, denoted as self-adjusting network \cite{avin2019toward}, deals with cases where we do not know the demand matrix ahead of time. Instead, we look at a past \emph{window} of $\win$ requests to approximate the current demand matrix. We use a fixed-sized window to adjust to changes in the demand gradually. 
This section is organized as follows:
Section~\ref{subsec:meta} presents a \emph{meta}-framework for online algorithms for the online DAN problem, Section~\ref{subsec:ego} explains how use matching based algorithms
on top of the framework. 
Notice that all these approaches use the same meta-framework for online algorithms. 

\subsection{Meta-Algorithm for the Online DAN problem}
\label{subsec:meta}
All the online demand-aware network (DAN) algorithms we consider in this work follows the same meta-framework to maintain a dynamic network $N(t)$ and to minimize the $\apl$ according to
Eq. \eqref{eq:apl}. We consider only $k$-regular DAN so $N(t)\in \nets_k$ must be a union of $k$ directed matchings at each time $t$.
Pseudocode for the meta-algorithm is shown in Algorithm~\ref{alg:meta}, and we explain it next.
The algorithm receives a trace $\sigma$, an update rate $\update$ defining the number of requests between subsequent network state updates, and a window size $\win$ used to approximate the current demand matrix $\D$.  In particular at time $t$, $\sigma[t-\win,t]$ denotes the $W$ last request in $\sigma$, an only those can be used to make decisions about the reconfigurations. 
The update rate $\update$ reflects the reconfiguration times imposed by technological limits of optical switches.
Changing matchings takes time and cannot be executed, for example, after each packet.  Therefore, once per $\update$ requests, the algorithm updates the network configuration using the \emph{Update}$()$ function (Line 5). The update function yields a new network configuration $N(t+1)$, i.e., DAN, according to the last $\win$ requests and the current network configuration $N(t)$. 
All the algorithms we study in this work for the online DAN problem follow this meta-algorithm and vary in their implementation of the update function, $\update$, and $\win$. 


\begin{algorithm}[t]
\caption{Meta online matching-update algorithm }\label{alg:meta}
\begin{algorithmic}[1]
		\Require A trace $\sigma$, Update rate $\update$, Window size $\win$
    	\Ensure Dynamic network $N(t) \in \netw_k$
\State  $N(1)=$ Initial network
\For{$t=1$ to  $\card{\sigma}$}
\State Serve $(s_t, d_t)$ on $N(t)$
    \If{$t \equiv 0\pmod \update$} \Comment{An Update, at rate $\update$}
    	\State $N(t+1) = \mathrm{Update}(N(t), \sigma[t-\win,t])$
    \Else
        \State $N(t+1)=N(t)$
    \EndIf
\EndFor
\end{algorithmic}
\end{algorithm}

    

\begin{figure*}[t!]
  \begin{centering}
  \subcaptionbox{MultiGrid}{\includegraphics[width=0.16\textwidth]{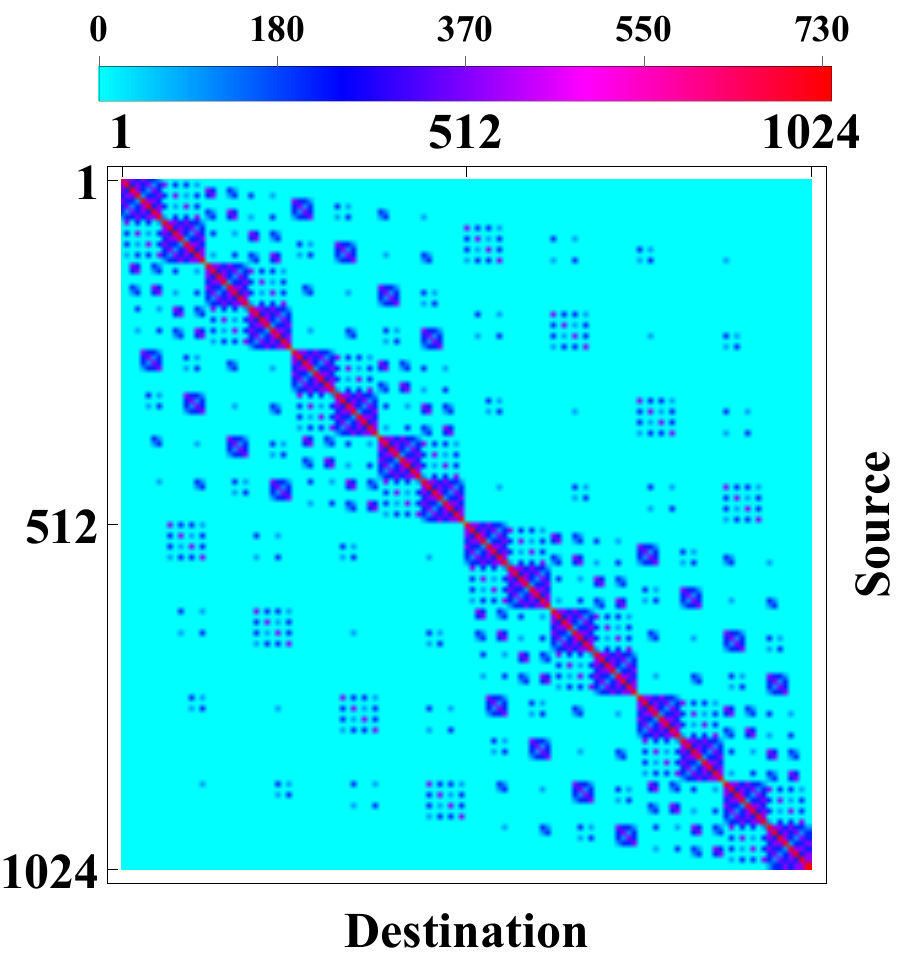}}
  \subcaptionbox{Nekbone}{\includegraphics[width=0.16\textwidth]{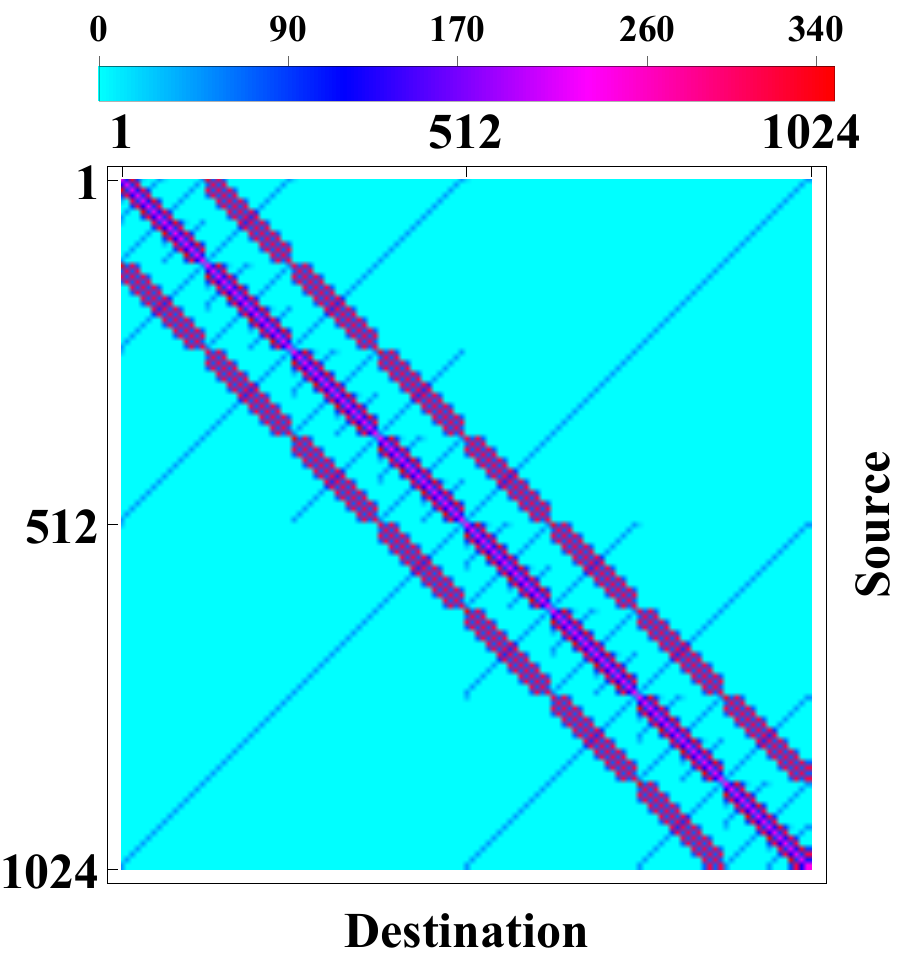}}
  \subcaptionbox{CNS}{\includegraphics[width=0.16\textwidth]{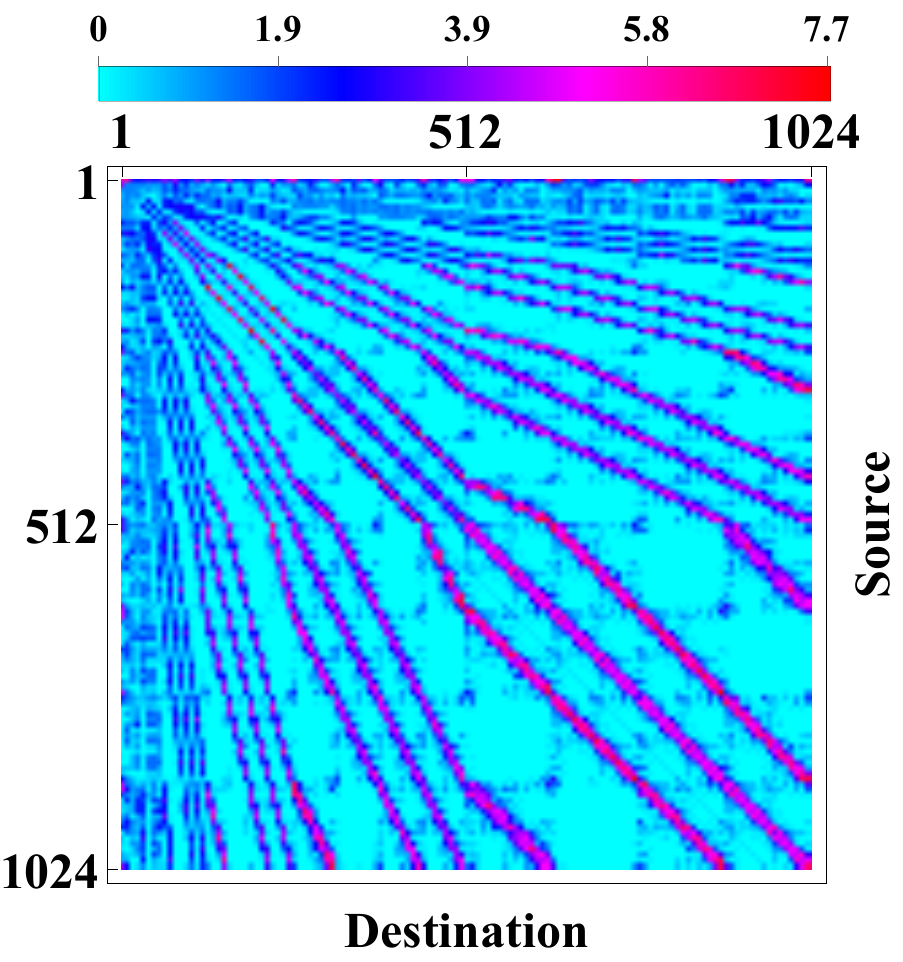}}
    \subcaptionbox{WEB}{\includegraphics[width=0.16\textwidth]{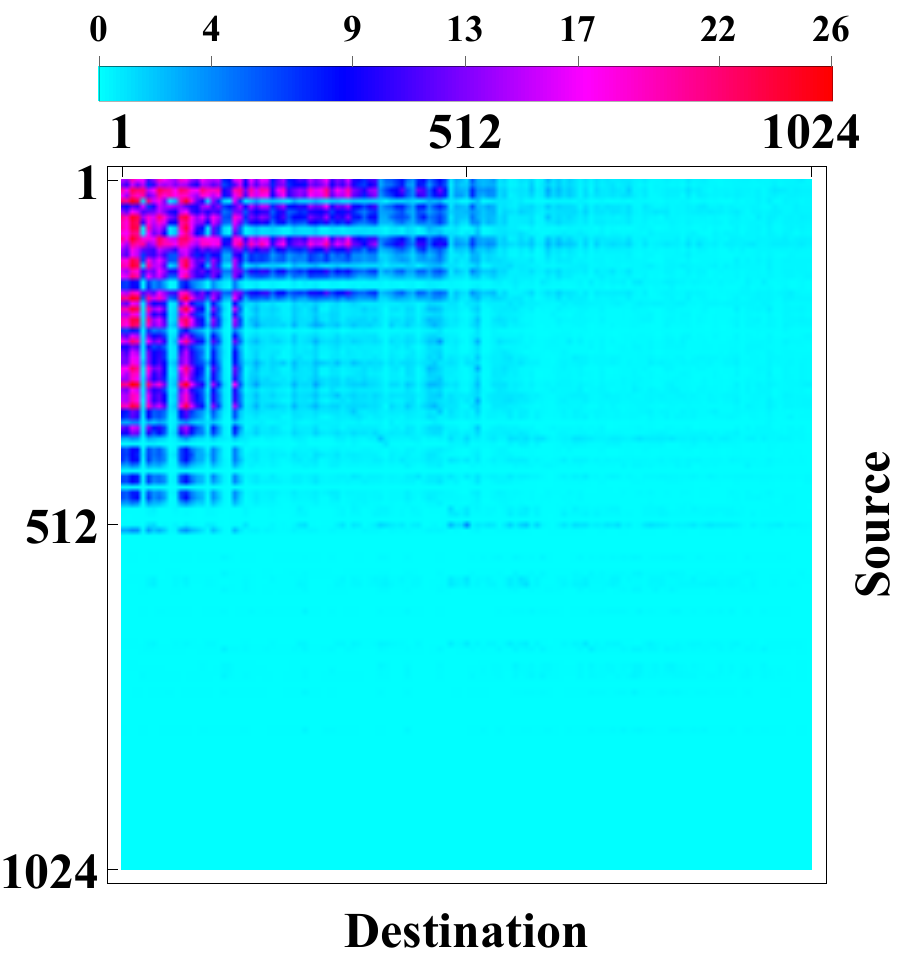}}
  \subcaptionbox{HAD}{\includegraphics[width=0.16\textwidth]{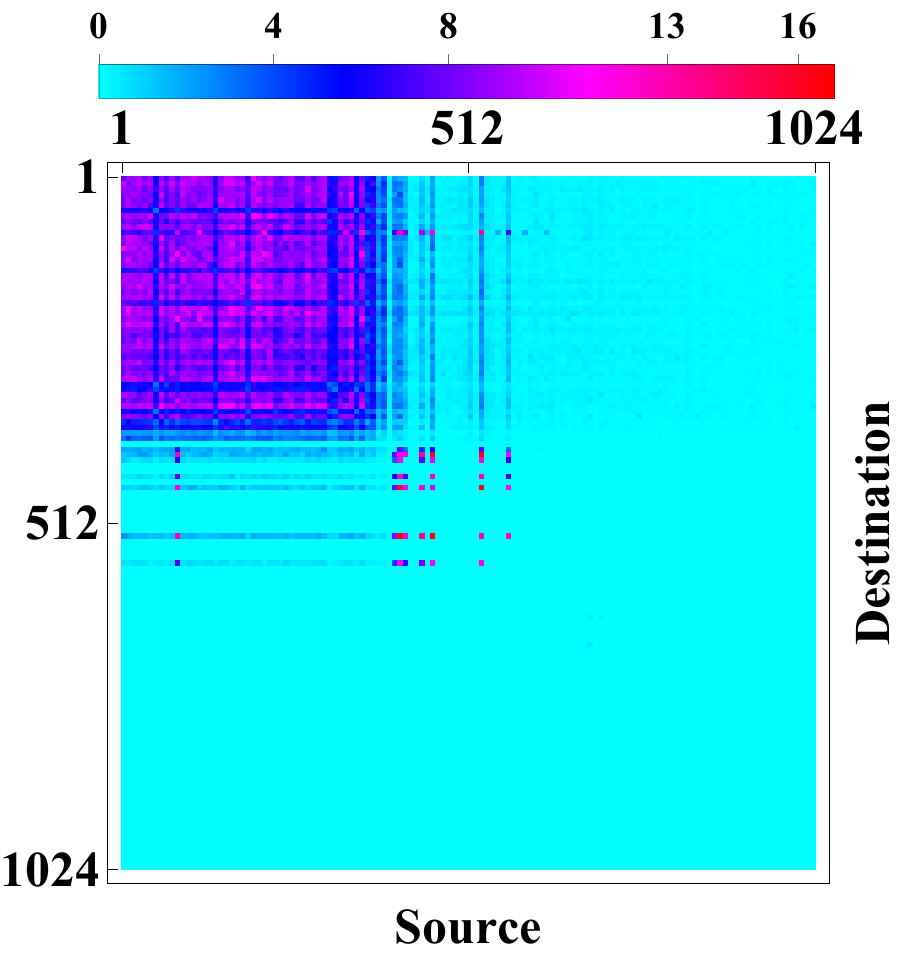}}
    \subcaptionbox{Stars}{\includegraphics[width=0.16\textwidth]{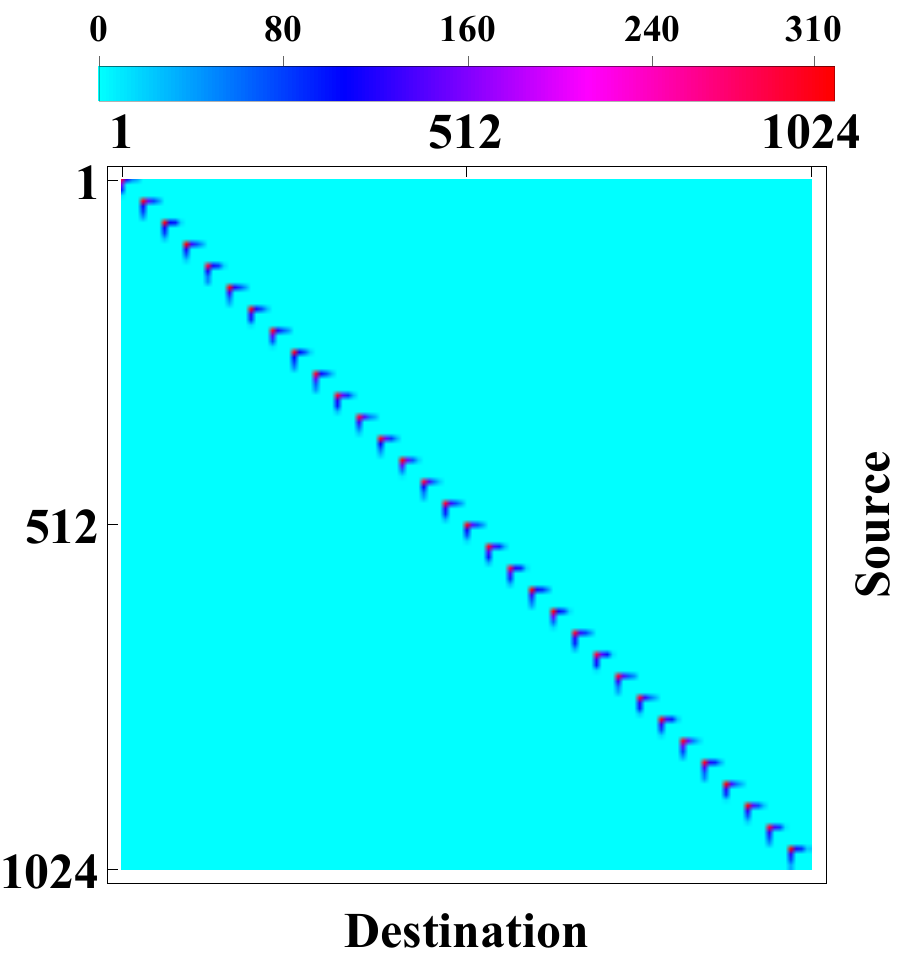}}  
    \caption{Traffic matrices for several of the communication traces. Colors are scaled individually, and the scale is provided at the top of each matrix. Axes represent source IDs (vertical) and destination IDs (horizontal). For HPC and star traces, IDs are given in the order of appearance, and for the Facebook traces, IDs are given by source activity level.}
    \label{fig:matrix}
  \end{centering}
\end{figure*}

\subsection{Online \algo and \textsc{GreedyMatching}}
\label{subsec:ego}

The Online \algo algorithm follows the meta-algorithm. Each time the Update method is called, we prepare a new traffic matrix $D$ based on the requests in $\sigma[t-\win,t]$ and run \algo. Formally 
\begin{align}
N(t+1) = \textsc{GreedyEgoTrees}(\D, k)
\end{align}

In the greedy $k$-matching case we update we use 
\begin{align}
N(t+1) = \textsc{GreedyMatching}(\D, k)
\end{align}

In the evaluation section, we compare these two methods on real traces.
Additionally, we compare both \textsc{GreedyMatching} and \textsc{GreedyEgoTrees}
with a recent proposal for an online b-matching algorithm, \cite{onlineBMatch2021} we describe next.

\subsection{Online b-matching algorithm: (\bma)}
\bma in an online dynamic version of the classic $b$-matching problem \cite{anstee1987polynomial}, originally designed for \emph{undirected} graphs. For a directed graph the problem is identical to the $k$ matchings problem where $b$ is the number of switches.

In \cite{onlineBMatch2021} the authors proposed an online competitive algorithm with an approximation ratio of $O(b)$ (in practical settings). 
We have adopted a directed version of the \bma \cite{onlineBMatch2021} to study in this paper.
\bma uses a links cache of $nk$ edges, and whenever a request arrives at the network, if an edge exists in the links cache, it serves it immediately over a single hop. Otherwise, \bma routes the requests using an alternative static expander network. 
This means that, while the meta-algorithm above uses $nk$ links, \bma uses a topology with a total of at most $2kn$ edges. Another difference from the meta-algorithm is the update rate and the window size. While our algorithms can only update the topology once for every $\update$ requests, \bma uses a cost parameter $\alpha$ to control the update rate. 
A cost of $0$ means that the cache is updated on every request and higher costs decrease the rate of change. In Section \ref{sec:results} we evaluate \bma with $\alpha=6$, same as in \cite{onlineBMatch2021}. This value means that in practice, \bma could change edges at a much faster rate than any of our main algorithms, possibly after only $2\alpha=12$ request. 
\bma works greedily by considering a threshold that depends on $\alpha$. When a source-destination requests reach the threshold, \bma adds that source-destination to the links cache and, if necessary, evicts other edges to keep the degree bounded. For exact details of the algorithm, we refer the reader to the paper.

\section{Datasets and Algorithms}\label{sec:evaluation}

This section introduces the datasets used in this paper and the algorithms we use in our evaluation. 
\subsection{Traffic Traces}
We use eight different traces from three different sources \cite{trace-collection}.
Four traces are from a high-performance computing cluster (HPC), three are from a Facebook (FB) datacenter, and one is a synthetic trace used to present an ideal test case for our algorithm \algo.  Table~\ref{tab:traces_data} provides some high-level relevant statistics such as the length of the traces and certain properties of the demand graph.  These include the number of unique nodes, the number of directed edges and average, and the minimum and max of in- and out- degrees.  We will later use some of these properties to explain the empirical results.  Note that a node that only acts as either a sender or a receiver will have a minimal degree of zero. 

\begin{table}[t]
\caption{Traces}
    \centering
    \begin{tabular}{|l|r|r|r|r|r|r|}
\hline
Name & Length &  Nodes & Edges & Avg~~ &  Min & Max \\ \hline
HPC/MultiGrid   & 1M & 1024 & 21240 & 20.74 & 7 &26      \\ \hline
HPC/Nekbone    & 2M & 1024 & 15461 & 15.1 & 0 & 36           \\ \hline
HPC/Mocfe      & 2.7M & 1024 & 4224 & 4.12 & 0 & 20      \\ \hline
HPC/CNS        & 1M & 1024 & 74308 & 72.56  & 53 &1023          \\ \hline
FB/DB   & 1M & 1024 & 84159 & 82.18 & 0 &825      \\ \hline
FB/WEB    & 1M & 1024 & 99301 & 96.97 & 0 & 639           \\ \hline
FB/HAD      & 0.8M & 1024 & 154275 & 150.65 & 0 & 577      \\ \hline
Synth/Stars        & 1M & 1024 & 1984 & 2  & 1 & 31          \\ \hline
\end{tabular} \vspace{-0.4cm}%
\label{tab:traces_data}
\end{table}

\subsubsection*{HPC traces}\emph{Four} traces of exascale applications in high-performance computing (HPC) clusters~\cite{doe2016characterization}. We refer to these traces as  MultiGrid, Nekbone, MOCFE, and CNS, each of these, represents a different application. Fig. \ref{fig:matrix} (a) through (c) provides additional intuition about the traffic patterns of these traces, as they show clear patterns. Some are more ordered than others.


\subsubsection*{Facebook Traces}
This set contains three different datacenter clusters from Facebook~\cite{roy2015inside}. The original traces represent more than 300M requests each (including self-loops), with different entries at different aggregation levels, such as pods, racks, and host IPs. The traces we used in this paper are sub traces that contain $1024$ nodes of the most active pairs of the racks level source-destination address trace.   
The three different clusters represent three different application types, Hadoop (HAD), a Hadoop cluster, web (WEB), front-end web cluster that serves web traffic, and database (DB), MySQL servers.


\subsubsection*{Stars Trace}
The Star trace is the (only) synthetic trace we use in this work, and we chose it to demonstrate the ideal patterns for \algo. Star contains a demand trace from a set of disjoint star graphs of the same size. That is, requests only travel from a star's center to its leaves or vice versa. We used a Zipf-like distribution \cite{reed2001pareto} to determine the traffic distribution in each star. That is, $\frac{C}{i}$ is the probability of a packet to or from the $i$'th leaf, and $C$ is a normalization constant.  
Specifically, our trace uses  $32$ stars each with $31$ leaves resulting in exactly $1024$ nodes like the real traces. Figure~\ref{fig:matrix} (f) shows the traffic matrix of this trace.

\subsection{Tested Algorithms}
Our evaluation includes the three \emph{online} algorithms presented in Section \ref{sec:online}, Namely online greedy~$k$-matching, \bma and our proposed algorithm online \algo. 

We also compare the online algorithm with dynamic topologies to two static topologies as a baseline. Specifically, we consider two \emph{static} networks built from $k$ static matching:  i) a \emph{demand-oblivious} expander graph  ii) a \emph{demand-aware} offline \algo algorithm which assumes the knowledge of the whole trace. 

\subsubsection*{Demand-oblivious expander network}
Since we are interested in short average path length networks, a natural solution considers networks with short diameters. Expanders~\cite{hoory2006expander} that were recently suggested as a datacenter topology \cite{xpander} are well-known regular graphs with good properties, including large expansion, multiple disjoint paths, small mixing times, and short diameter of  $O(\log n)$ where $n$ is the number of vertices. 

We construct our expander network by creating $k$ uniformly at random matching, one for each switch. It is known from previous work that expanders can be created by taking the union of a few matchings~\cite{goldreich2011basic}. We created many expanders and selected the best one when considering $\apl$ on an all-to-all communication pattern. 

\subsubsection*{Demand-aware offline \algo}
Here we consider the \algo algorithm when the whole trace $\sigma$ is known as an input. Recall that \algo creates a demand-aware $k$ regular network, and in this case, sees the \emph{future} requests. Thus it is an interesting benchmark for online \algo, indicating how important it is to change configuration dynamically and capture temporal communication patterns within the trace. 


\section{Empirical Results}\label{sec:results}

In this section, we evaluate the algorithms on the traces dataset. 
We start by exploring the effects of the window size ($\win$) and update rate ($\update$) on the average path length (APL). 
In our evaluation we have tested an array of window sizes and update rates. The tested values for both 
$\win$ and $\update$ were $\{0.5 \PLH 10^4 ,1 \PLH 10^4 ,2 \PLH 10^4 ,4 \PLH 10^4 ,10 \PLH 10^4 \}$ packets.
This provided us with a total of $25$ tests per trace. 

\begin{table}[t]
\renewcommand{\arraystretch}{1.2}
\small
\centering
\begin{tabular}{|l|r|r|c|}
\hline
Trace Name & Best $\update$ &  Best $\win$ & APL Diff  \\ \hline
\multicolumn{4}{c}{\textbf{Online \algo}}         \\ \hline
HPC/MultiGrid    & $0.5  \PLH 10^4$ & $2  \PLH 10^4$ & 0.4 \%          \\ \hline
Synth/Stars     &  $10  \PLH 10^4$ & $10  \PLH 10^4$ & 2.1 \%    \\ \hline
FB/HAD  &$10  \PLH 10^4$ & $ 4  \PLH 10^4$ & 6.5 \%   \\ \hline
\multicolumn{4}{c}{\textbf{Online \greedy}}        \\ \hline
HPC/MultiGrid     &  $1 \PLH 10^4$ & $2 \PLH 10^4$ & 0  \%          \\ \hline
Synth/Stars      &  $10  \PLH 10^4$ & $10  \PLH 10^4$ & 0.1 \%    \\ \hline
FB/HAD         &  $10  \PLH 10^4$ & $2  \PLH 10^4$ & 2 \%        \\ \hline
\end{tabular} 
\caption{Window size and update rate. Simulation setup are 
update rate $\update=1 \PLH 10^4$, window size $\win =2 \PLH 10^4$.}
\label{tab:ParamesTable}
\end{table}

\subsection{The Window size}
We now discuss the window size ($\win$) parameter used in \algo and \greedy. These algorithms use ($\win$) to estimate the current demand matrix, and it is unclear what is the ideal window size. A short window may be desirable when the demand matrix changes significantly over time. When demand is (relatively)  static, larger windows yield a better result as they more accurately sample the current demand.

Table~\ref{tab:ParamesTable} indicated what were the best $\win$ and $\update$ parameters for three different examples traces. That is, the $\win, \update$ combination that produced the lowest APL for both of our main online algorithms: \algo and \greedy. The last column in the table shows the difference in ($\%$) from the APL produced when using our chosen parameters, which we will discuss shortly.  
When looking at the window size column at Table \ref{tab:ParamesTable} we notice that most optimal values lie at $2 \PLH 10^4$, one exception is the Stars synthetic trace, where the largest tested window size yielded the best result, though by small margin of $2.1\%$ and $0.1\%$ for online \algo and \greedy respectively. This is predictable since a small window does not capture all request types. The lack of temporal locality in the stars trace means that there is no benefit in using a small windows for this synthetic trace. This is unlikely to be true for most other traces as they usually contain some temporal locality \cite{tracecomplexity}.
The HAD trace is intriguing, as \greedy favors a smaller window than \algo. Intuitively, \greedy can only build a smaller number of paths, and thus in small windows, it can exploit temporal-locality. However, \algo creates many more (indirect) paths and benefits from a larger window size. 


In conclusion, our evaluation demonstrates that the desired window size depends on the workload and the algorithm. However, from now on, we use $\win =2\PLH10^4$ since, as we can observe in the difference column of Table~\ref{tab:ParamesTable}, that value is empirically satisfactory in all the workloads yielding at most a small difference in APL performance.

 \begin{figure}
  \begin{centering}
  \includegraphics[width=1\columnwidth]{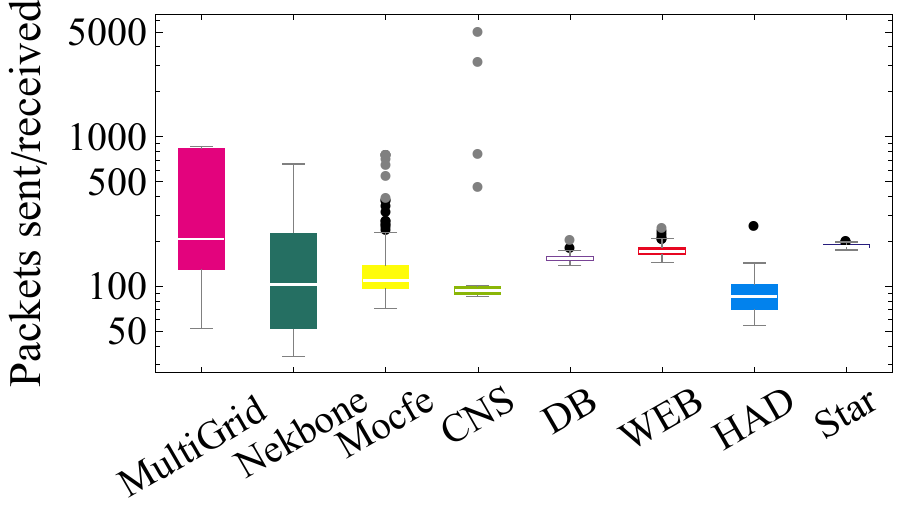} 
  \end{centering}
  \caption{A box plot representing the distribution of most active nodes source/destination inside each update window, using an update rate of $\update = 10^4$. Note that the $y$ axis is in log scale. }
    \label{fig:boxChart}
  \end{figure}

\begin{figure*}[t!]
  \begin{centering}
 \setlength{\tabcolsep}{3pt}
 \begin{tabular}{rcccccccc}
    &   \multicolumn{8}{c}{\includegraphics[height=.7cm]{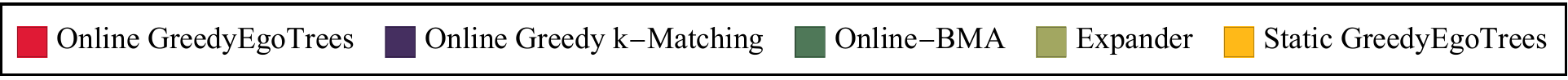}} \\
  \raisebox{1cm}{\rotatebox[origin=lc]{90}{\small Average Path Length}} \hspace{-0.3cm} &
  \subcaptionbox{Mocfe}{\includegraphics[width=0.11\textwidth]{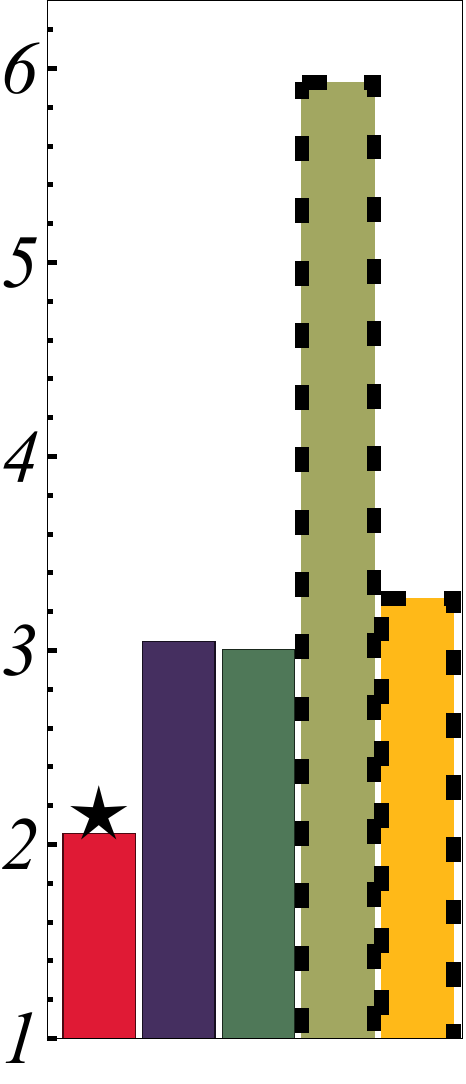}} &
  \subcaptionbox{Stars}{\includegraphics[width=0.11\textwidth]{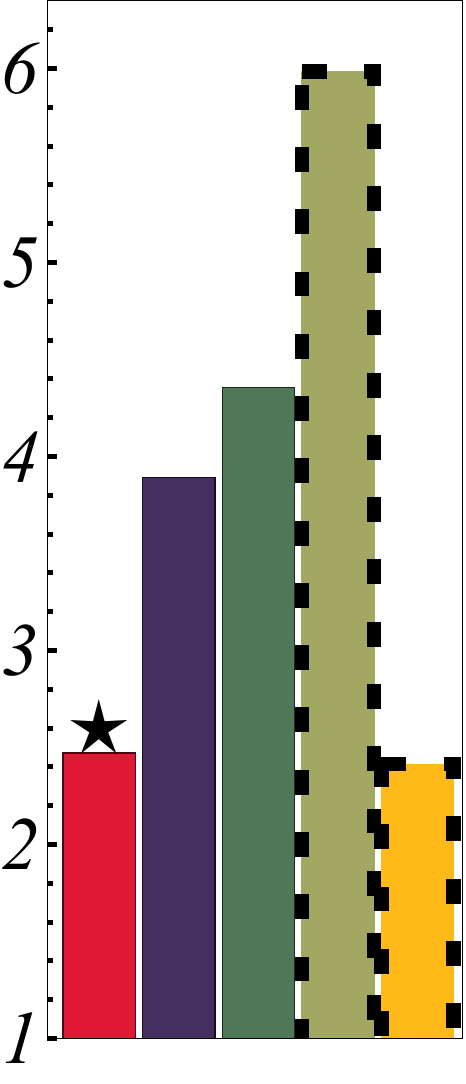}} &
  \subcaptionbox{MultiGrid}{\includegraphics[width=0.11\textwidth]{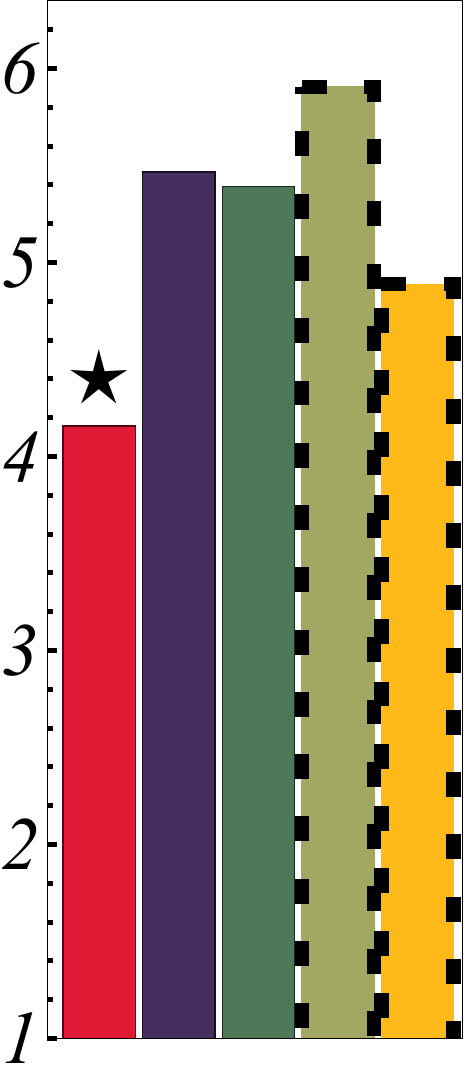}} &
  \subcaptionbox{Nekbone}{\includegraphics[width=0.11\textwidth]{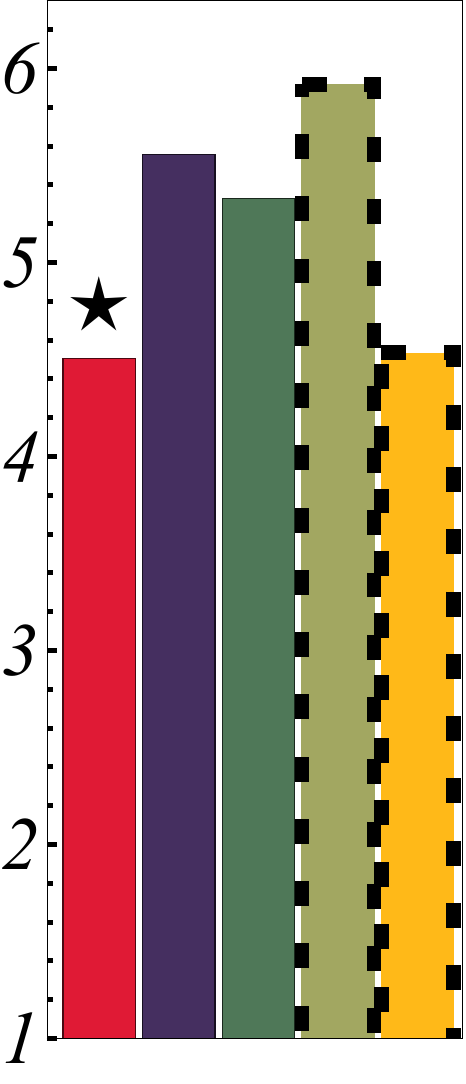}} &
  \subcaptionbox{WEB}{\includegraphics[width=0.11\textwidth]{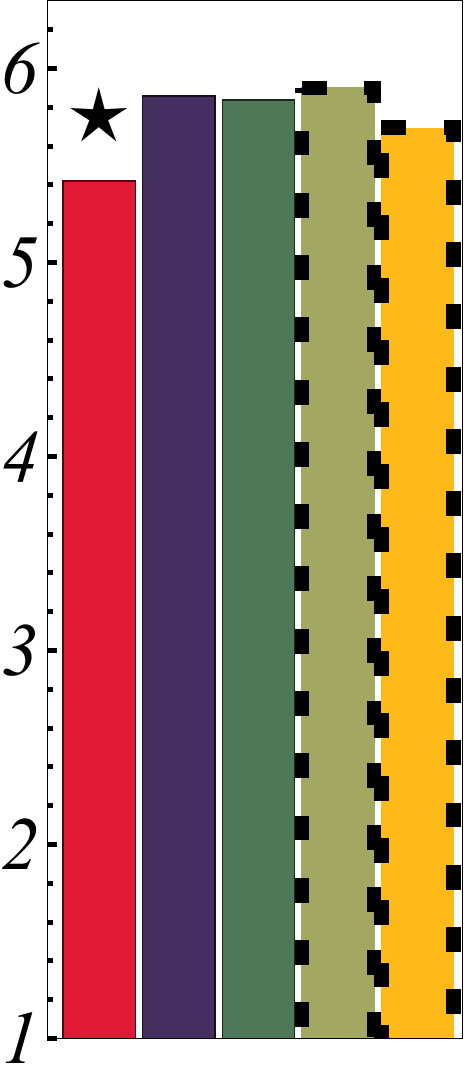}} &
    \subcaptionbox{HAD}{\includegraphics[width=0.11\textwidth]{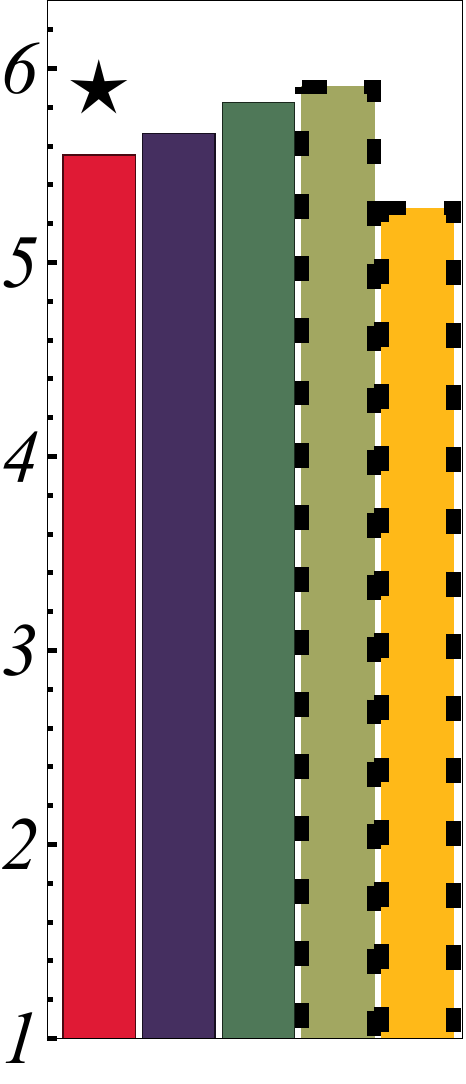}} &
  \subcaptionbox{DB}{\includegraphics[width=0.11\textwidth]{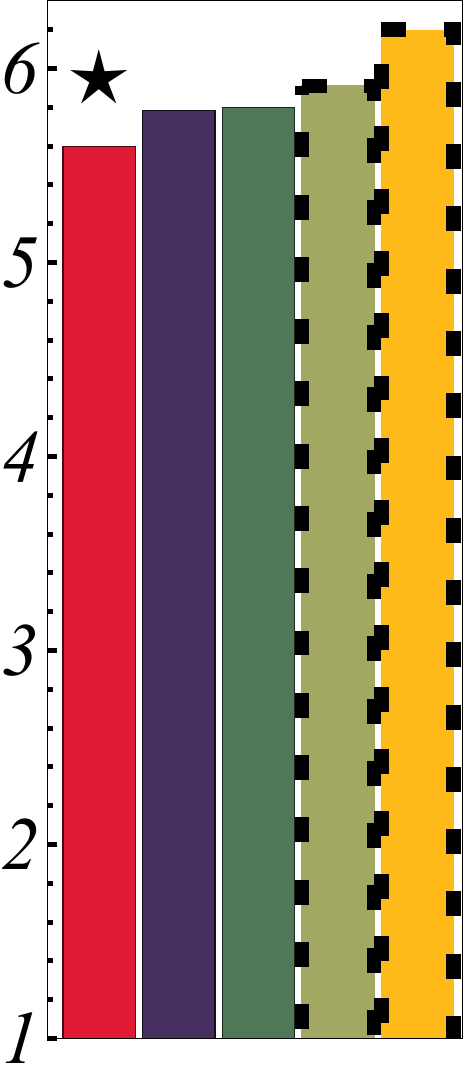}} &
  \subcaptionbox{CNS}{\includegraphics[width=0.11\textwidth]{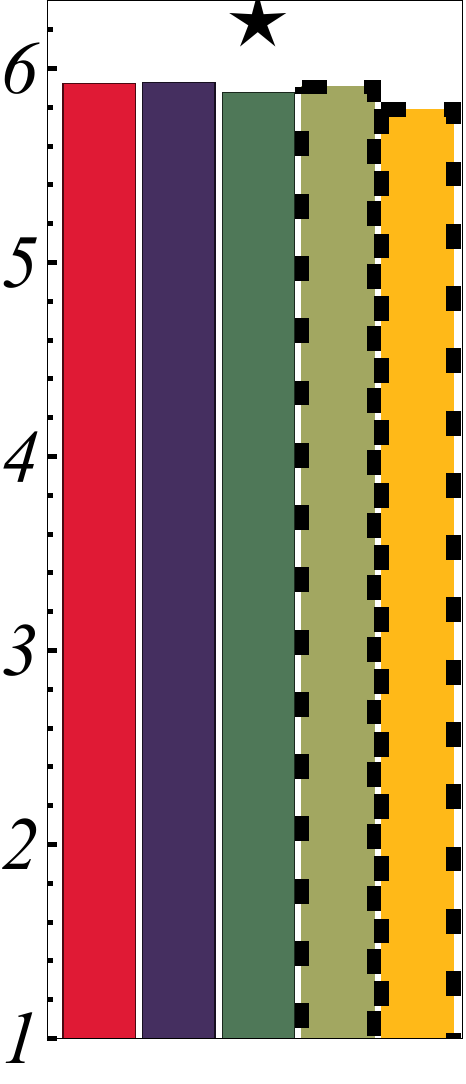}} 
   \end{tabular}
    \caption{$\apl$ for static algorithms (dashed lines) and dynamic algorithms (full lines) on eight traces.  We use $\update=10^4$, and $\win=2  \PLH 10^4$ when applicable. The star sign ({\fontfamily{lmr}\selectfont$\star$}) marks the best \emph{dynamic} algorithm  for each trace. }
    \label{fig:mainresults}
  \end{centering}
\end{figure*}

\subsection{The Update Rate}
The update rate ($\update$) controls the update frequency, and we expect faster updates to yield better results. However, real-world constraints on reconfiguration time would prevent real systems from employing high-speed update rates. Specifically, the reconfiguration time of state-of-the-art optical switches is circa a few tens of microseconds ($\mu s$)~\cite{hall2021survey,mordia}. Due to a lack of accurate timing regarding packets in our traces, our work measures the update rate in packet counts. We may determine that a realistic reconfiguration time would be anything above once per $10^4$ packets according to the following back of the envelope calculation: transmitting a single MTU-sized packet (1500B) on a single $40 Gps$ takes roughly $0.3\mu s$. Therefore, we can send approximately $100$ such packets on a port within the reconfiguration time of network switches. 
For $10^4$ packets to represent a window of time at least $30\mu s$ long, we would need at least one node to send/receive $100$ packets or more during the window. In this case, even if all $100$ packets were sent back to back with no delay from the beginning of the reconfiguration window, we would know that enough time has elapsed. Figure~\ref{fig:boxChart} explores how common is such an event. For each of our $8$ traces, the a box and whiskers represent the distribution of the number of packet sent/received by the most active node in the window of $10^4$ packets, for the whole trace. We can see that for every trace other than for HAD, the average number of packets is at least $100$. Furthermore, the distributions show the majority of windows contain at least one high-volume node. 
Considering a more empirical approach, looking at the update rate column at Table \ref{tab:ParamesTable} we notice two cases, for MultiGird, the optimal update rate is low, $10^4$ or $0.5  \PLH10^4$, while the other two traces, HAD and stars find an optimal value at $10 \PLH 10^4$.  Again, we can attribute this to the existence, or lack thereof of temporal locality in a trace. 


We conclude that while a low update rate is desired, it is not always necessary for a good result. Thus, we continue our evaluation with an update interval of $10^4$ requests, which we believe matches current technology capabilities, and, as we can observe in the difference column of Table~\ref{tab:ParamesTable}, it is empirically satisfactory in all the workloads yielding at most a small difference in APL performance. 

\subsection{The Average Path Length (APL)}

This section discusses the major metric of interest, the average path length of each dynamic or static topology.
The dynamic algorithms use an update rate of $\update =1\PLH 10^4$ and a window of $\win =2\PLH 10^4$ (where relevant). We run the tests on all eight traces described in Table~\ref{tab:traces_data} and five algorithms: online greedy~$k$-matching, \bma, online \algo, static expander, and static \algo.

Figure~\ref{fig:mainresults} presents the APL for all traces and algorithms (topologies). Full lines mark dynamic algorithms in the figure, and dashed bars mark static algorithms. The figure uses black stars to indicate which dynamic algorithm has the best performance (lowest APL). Also note, that the bar charts are ordered from the lowest to the highest APL for online \algo. We note that the result does not consider the first window in each trace.   

We start with the observation that the expander graph has APL of roughly $5.9$ for all traces.
This is not surprising since the expander is demand oblivious, and on expectation will be the same for all traces (not aware of the expander topology). This also shows that demand oblivious networks may reach consistent results, but fail to take advantage of structure in the traffic pattern and are optimized for worst case traces, which lack structure. 
In contrast, the static \algo, which is demand aware is almost always better than the expander and changes between traces. But, recall that Static \algo has advance knowledge of the entire trace.
One exception is the DB trace, where Static \algo is slightly worse than the expander, possibly due to overfitting the network to the heaviest $kn$ edges. This overfitting is mitigated for the online version of the algorithm, showing the benefits of dynamic networks. 

Compared to the dynamic online algorithms, static \algo displays mixed results. It is slightly better than all dynamic algorithms in a few cases, such as for Stars and HAD, and in others (most notably the Mocfe trace), it is much worse. For reference, the Stars trace has no temporal locality, and indeed Static \algo is better on that trace. These results highlight that dynamic demand-aware networks, such as those represented by online \algo manages to benefit from temporal-locality, found in these traces, and can beat static algorithms even those  which are clairvoyant. 

Let us compare now the different online algorithms. In this evaluation, the two matching-based algorithms \bma and  \greedy attain very similar results in all but the synthetic Star trace. A possible explanation is that the Star trace has no temporal locality for the algorithms to exploit. Figure ~\ref{fig:mainresults} also reveals that \algo has the lowest APL among all the dynamic algorithms with one exception on the CNS trace, where the result is slightly in favor of \bma. Recall that \bma uses more edges as it has an additional expander network. In any case, \algo comes second by less than a $1\%$ in this case. Note also that for CNS there is only a negligible improvement with \algo compared to the other algorithms, static or otherwise. 
Therefore, the benefits of \algo could depend on the average degree and the amount of structure found in the trace. These results show that our multi-hop based approach beats the more common k-matching, single hop approach.
We can look into a distribution of path lengths taken by packets on our online \algo and online \greedy to helps us understand how the former beats the latter. Figure \ref{fig:CDFPathLen} shows a CDF of path lengths served by either Online \algo or Online \greedy on two traces Mocfe and Nekbone. While Online \greedy has more requests sent over a single hop for both traces, a consequence of it optimizing towards this result, Online \algo can optimize towards overall shorter path lengths by sacrificing single-hop connections. Furthermore, on the Mocfe trace, online \algo can send more than $70\%$ of requests with a path length of $2$ or less, showing how this algorithm can take advantage of the structure in the trace and optimize towards lower APL.
To conclude, the results show that \algo is better and at least no worse than any of the other dynamic algorithms. It most notably outperforms static expander (random graph), representing a current best demand-oblivious topology. The results demonstrate locality patterns in real application traces, and that \algo manages to leverage these to yield shorter routing paths. Also, observe that \algo is better than \bma, which is, in turn, better than demand-oblivious expanders. Our evaluation demonstrates that there are opportunities in demand-aware networks, particularly in dynamic demand-aware networks. 

\begin{figure}[t]
  \begin{centering}
  \includegraphics[width=1\columnwidth]{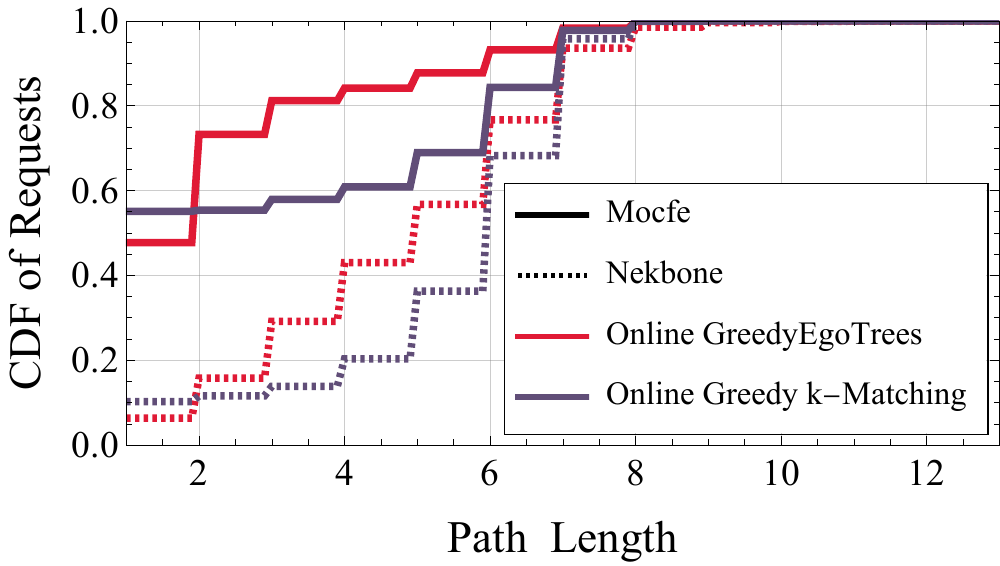} 
  \end{centering}
  \caption{A CDF of path lengths when served by either Online \algo (red) or Online \greedy (purple) on two traces Mocfe and Nekbone, represented by the solid and dashed lines respectively.}
  \label{fig:CDFPathLen}
\end{figure}

\section{Related Work}\label{sec:related}
Today, multi-rooted fat-trees and Clos topologies are some of the most widely deployed static datacenter networks \cite{clos,jupiter,f10}. More recently, Expander graph-based topologies become an important type of static topology which is being evaluated~\cite{xpander,jellyfish}.
Static topologies are naturally demand-oblivious, but not all demand-oblivious networks are static. For example, RotorNet~\cite{rotornet}, Opera~\cite{opera} and Sirius \cite{ballani2020sirius} are dynamic optical datacenter network designs that are demand-oblivious. 
They work by rotating between oblivious and predefined matchings, emulating a complete graph network, and thus providing high throughput. 
However, these designs introduce some difficulties in terms of synchronization and routing. Moreover, they don't change the topology in a demand-aware fashion, and some recent work suggests that demand-aware dynamic topologies may have an advantage over demand-oblivious networks in terms of throughput \cite{cerberus}. 


A connection between demand-aware network designs and information theory is established in
\cite{avin2019toward,splaynet,avin2019demandDIST}. These studies show that achievable routing performance relies on the (conditional) entropy of the demand. Furthermore, empirical studies point towards traffic patterns in datacenters being often \emph{skewed} and \emph{sparse} \cite{projector,roy2015inside}. And an analysis of datacenter traffic in the form of trace complexity shows that network traffic contains patterns that can be leveraged to improve network performance \cite{tracecomplexity}. Another recent work, Cerbrus~\cite{cerberus}, has shown that dynamic and static demand-oblivious networks can be augmented with a demand-aware network. The combined network can outperform each of the pure demand-oblivious networks.
Meanwhile, several dynamic demand-aware datacenter network designs were proposed, including, Eclipse \cite{venkatakrishnan2018costly}, Mordia~\cite{mordia}, or Solstice~\cite{solstice}. 
These suggestions employ traffic matrix scheduling via Birkhoff-von-Neumann decomposition. The generated schedule of single-hop connection serves all demands in an ideal way, with no limit on reconfiguration. Other projects, such Helios~\cite{helios}, ProjecToR ~\cite{projector}, and Online dynamic $b$-matching \cite{onlineBMatch2021} 
focus on maximum matching algorithms. None of these designs use indirect multi-hop routing on their dynamic infrastructure. 
In particular, $b$-matching for undirected graphs is quite similar to our approach of online greedy $k$-matching, as it optimizes for the highest cost matching. They present the \bma algorithm, which is shown to be a constant factor approximation. 
Regrading the offline $k$-matching (edge-disjoint) problem (in undirected graphs), in \cite{hanauer2022fast} the authors attempt to find an optimal heavy matching, using different offline algorithms, which entail higher running times that depend on $k$. They also show the $k$-matching problem is NP-hard for $k \ge 3$. 
We note that the $b$-matching and the $k$ edge-disjoint matchings for undirected graphs are not the same problem. For example, the three edges of a triangle can be covered with a $2$-matching, but not with a $k$ edge-disjoint matchings.  
CacheNet~\cite{cacheNet} have recently offered to model demand-aware networks as a network of cached links and compared this network to the demand oblivious RotorNet~\cite{rotornet}. However, while CacheNet is similar to our approach, it differs in several ways. First, the optimized metric is different. Second, the links cache in CacheNet is of \emph{unbounded} degree, and third, CacheNet uses only single-hop routing.
We are not aware of any work exploring an online demand-aware algorithm similar to \algo and believe that our interpretation of links cache in bounded degree networks is novel.

\section{Conclusions}\label{sec:conclusion}
Our work demonstrates that a demand-aware network design can further optimize the network topology and reduce the average path length. We present \algo{} that successfully leverages temporal and nontemporal localities in workloads, yielding a shorter average path length than static expander-based networks and previous demand-aware algorithms. 
Specifically, our online \algo forms short routing paths according to a dynamic demand matrix. Through extensive evaluations, we show that \algo attains up to 60\% reduction in average path length with respect to the static expander networks. 

Looking into the future, we seek to form more dynamic algorithms that adapt their configuration (most notably the window size and request weights) to the current workload. 




\bibliographystyle{./bibliography/IEEEtran}
\balance
\bibliography{literature}

\end{document}